\documentclass[aps,prl,superscriptaddress,twocolumn,10pt,longbibliography]{revtex4-2}

\usepackage{amsmath,mathtools,amsthm,amssymb}
\usepackage{times}
\usepackage{graphicx}
\usepackage{aliascnt}
\usepackage[dvipsnames]{xcolor}
\usepackage{enumitem}
\usepackage{physics}
\usepackage{array}
\usepackage{microtype}
\usepackage{caption}
\definecolor{myrefcolor}{rgb}{0.067,0.5,0.5}
\definecolor{myurlcolor}{rgb}{0.1,0,0.9}
\usepackage[
    breaklinks,
    pdftex,
    colorlinks=true,
    linkcolor=myrefcolor,
    citecolor=myrefcolor,
    urlcolor=myurlcolor
]{hyperref}
\usepackage[capitalize,noabbrev]{cleveref}

\newtheorem{theorem}{Theorem}
\crefname{theorem}{Theorem}{Theorems}
\Crefname{theorem}{Theorem}{Theorems}

\newaliascnt{lemma}{theorem}
\newtheorem{lemma}[lemma]{Lemma}
\aliascntresetthe{lemma}
\crefname{lemma}{Lemma}{Lemmas}
\Crefname{lemma}{Lemma}{Lemmas}

\newtheorem{proposition}{Proposition}
\crefname{proposition}{Proposition}{Propositions}
\Crefname{proposition}{Proposition}{Propositions}
\newtheorem{corollary}{Corollary}
\crefname{corollary}{Corollary}{Corollaries}
\Crefname{corollary}{Corollary}{Corollaries}
\newtheorem{definition}{Definition}
\crefname{definition}{Definition}{Definitions}
\Crefname{definition}{Definition}{Definitions}

\theoremstyle{remark}
\newtheorem{remark}{Remark}

\newcommand{\fu}{Dahlem Center for Complex Quantum Systems, Freie Universit\"at Berlin, 14195 Berlin, Germany}
\newcommand{\sa}{Dipartimento di Ingegneria Industriale, Università degli Studi di Salerno, Via Giovanni Paolo II, 132, 84084 Fisciano (SA), Italy}
\newcommand{\parhead}[1]{\noindent\textit{#1}}

\newcommand{\Sym}{\operatorname{Sym}}

\newcommand{\CCZ}{\mathrm{CCZ}}
\newcommand{\Tstate}{T}
\newcommand{\Mlin}{M^{\mathrm{lin}}_3}

\newcommand{\Pthree}{P_3}

\newcommand{\ketccz}{\ket{\CCZ}}
\newcommand{\kett}{\ket{\Tstate}}

\newcommand{\e}{\mathrm{e}}
\newcommand{\ii}{\mathrm{i}}

\allowdisplaybreaks

\begin{document}

\title{Universal magic state concentration}

\author{Jacopo Rizzo}
\affiliation{\fu}
\author{Lorenzo Leone}
\affiliation{\sa}
\affiliation{INFN, Sezione di Napoli, Gruppo Collegato di Salerno, Italy}

\begin{abstract}
Magic plays a dual role in quantum computation: it promotes stabilizer dynamics from efficient classical simulability to universality, but it presents a central challenge for fault tolerance, since non-stabilizer operations are harder to protect against noise. Magic state distillation addresses this issue; however, existing protocols typically assume prior structure in the input, such as proximity to the target or a specified noise model. Here we introduce universal magic state concentration: a fixed stabilizer protocol that converts a few copies of an unknown pure non-stabilizer qubit state into an exact target magic state. Motivated by the obstruction to exact $T$-state concentration, we show that $\CCZ$ states behave fundamentally differently. Six input copies are necessary and sufficient to distill one exact $\CCZ$ state, with an optimal success probability determined by the linearized order-three stabilizer Rényi entropy $\Mlin$.  
Beyond this, we show that $\Mlin$ governs the optimal state dependence of any protocol up to nine input copies, and we showcase an eight-copy protocol with improved success probability. Furthermore, block repetition of our protocols yields asymptotic distillation rates that achieve optimal scaling up to logarithmic factors. As a corollary, any unknown pure qubit magic state suffices for universal quantum computation via exact $\CCZ$ injection. Together, these results identify the stabilizer Rényi entropy as a fundamental operational quantity in magic state distillation.
\end{abstract}

\maketitle

\parhead{Introduction.---}
Magic delineates the boundary between the efficiently classically simulable stabilizer regime and universal quantum computation, providing the essential resource behind quantum computational advantage \cite{Gottesman1998,Aaronson_2004,BravyiKitaev2005}. At the same time, it constitutes a major engineering bottleneck in fault-tolerant quantum computing. Logical Clifford operations admit particularly simple fault-tolerant implementations in many quantum error-correcting codes, often transversally, and are therefore comparatively inexpensive to protect against noise \cite{gottesman2009introductionquantumerrorcorrection}. By contrast, implementing non-Clifford gates fault tolerantly is typically substantially more demanding. More generally, no quantum error-correcting code admits a universal set of transversal logical gates \cite{Eastin_2009}. In standard fault-tolerant architectures, non-Clifford gates---most notably the $T$ gate---can consequently require substantial auxiliary-qubit and error-correction overhead \cite{Bravyi_2012,Litinski_2019}. They are therefore often implemented through magic-state injection \cite{BravyiKitaev2005}, in which the required non-stabilizer resource is supplied by a specially prepared magic state, while the remaining processing uses only stabilizer operations. The reliable production of high-quality magic states is thus a central challenge for scalable fault-tolerant quantum computation.

The key primitive in this setting is magic-state distillation \cite{BravyiKitaev2005}, whereby multiple copies of a noisy resource qubit state are processed using only stabilizer operations into fewer states of higher fidelity with a prescribed pure magic state. In seminal work, Bravyi and Kitaev \cite{BravyiKitaev2005} introduced stabilizer-code-based protocols in which five input states are distilled toward a $T$-type target and fifteen input states toward an $H$-type target, conditioned on suitable syndrome outcomes. The corresponding target states can be written as $\ket*{T_{\mathrm{BK}}}\coloneqq\cos\theta\,\ket{0}+e^{i\pi/4}\sin\theta\,\ket{1}$, with $\cos(2\theta)=1/\sqrt{3}$, and $\ket{H}\coloneqq\cos(\pi/8)\ket{0}+\sin(\pi/8)\ket{1}$. Iterating these routines drives the output toward the target magic state whenever the input fidelity towards a specific magic direction exceeds a threshold. The distilled states can then be consumed through gate injection to implement universal quantum computation. This established the template followed by conventional distillation protocols: the target state and code are fixed in advance, while selected syndrome branches suppress errors at the cost of consuming multiple inputs.

Reichardt \cite{Reichardt2005,Reichardt2009} later improved the distillation thresholds for qubits, showing in particular that every pure non-stabilizer state enables universal computation with stabilizer operations, although the required protocol may depend on the input state's Bloch-sphere direction.
For mixed states, the situation is more subtle: non-stabilizer qubit states can exhibit bound magic under finite-size distillation protocols \cite{Campbell_2010}, while exact purification of full-rank resource states is generally impossible under broad classes of free operations \cite{Fang_2020,FangLiu2022}.

Later code-based protocols progressively reduced the input-to-output overhead of distilling noisy states toward a prescribed target, under suitable noise assumptions. For iterated $[[n,k,d]]$ code constructions, with $n$ (resp. $k$) physical (resp. logical) qubits, and distance $d$, this overhead scales as $O(\log^\gamma(1/\delta))$ with $\gamma=\log(n/k)/\log d$. Multi-output and multilevel protocols progressively reduced this exponent \cite{meier2012magicstatedistillationfourqubitcode,Bravyi_2012,Jones_2013}, culminating in the first sublogarithmic construction with $\gamma<0.678$ \cite{HastingsHaah2018}. Subsequent code families made $\gamma$ arbitrarily close to zero \cite{KrishnaTillich2019,GolowichGuruswami2025}, while more recent constructions achieved constant overhead \cite{WillsHsiehYamasaki2025}.

On the other hand, a complementary line of research aims to prepare the required resource more directly, rather than first distilling simpler magic states, by combining error suppression with preparation of the desired resource \cite{Eastin2013,Jones2013,CampbellHoward2017,HaahHastings2018Codes}. Surface-code factories further reduce physical overhead through catalytic conversions and optimized code distances \cite{GidneyFowler2019,Litinski_2019}. More recently, zero-level distillation and magic-state cultivation suppress errors while preparing and growing logical magic states, before full surface-code protection \cite{Itogawa_2025,gidney2024magicstatecultivationgrowing}.

These developments, however, usually rely on prior assumptions about the input, such as a specified noise model or sufficient fidelity with a prescribed magic direction. This leaves open the broader question of whether magic can be concentrated universally, that is, without any prior knowledge of the initial resource state. More precisely, by \textit{universal} we mean a fixed stabilizer protocol whose circuit, measurements, acceptance rule, and outcome-dependent corrections are all chosen independently of the input state, requiring neither tomography nor state-dependent adaptation. Here we address this question for pure qubit states, imposing the stronger requirement of exact distillation of a fixed target magic state. By \textit{concentration}, we mean that we restrict ourselves (mostly) to pure-state inputs; this is in fact necessary for a nontrivial universal exact distillation problem, since for arbitrary non-stabilizer mixed states the worst-case success probability necessarily vanishes \cite{Fang_2020,FangLiu2022}.

More broadly, this problem lies within the realm of quantum resource theories \cite{Chitambar_2019}, which characterize transformations between free and resourceful states under a specified class of free operations. For magic, stabilizer states are free, and stabilizer operations---generated by Clifford unitaries, preparation of stabilizer states, Pauli measurements, feedforward, and discarding---form the set of free operations, with magic being the resource required for universal quantum computation \cite{Howard_2017,Veitch_2014}. Universal resource distillation has also been studied for entanglement \cite{hayashi2002universaldistortionfreeentanglementconcentration} and within more general resource theories under resource-non-generating operations \cite{lami2026universalquantumresourcedistillation}.

A useful characterization of magic is provided by the stabilizer Rényi entropies \cite{LeoneOlivieroHamma2022}. For an $n$-qubit pure state $\ket{\phi}$, let $\widehat{\mathcal{P}}_n\coloneqq \mathcal{P}_n/\{\pm I,\pm iI\}\simeq\{I,X,Y,Z\}^{\otimes n}$ denote the Pauli group modulo phase. We also define $\mathcal{C}_n\coloneqq\{U\in\mathcal{U}(2^n): \,U\mathcal{P}_nU^\dagger=\mathcal{P}_n\}$ to be the $n$-qubit Clifford group. The order-$\alpha$ stabilizer purity is $P_{\alpha}(\phi)\coloneqq 2^{ -n}\sum_{P\in\widehat{\mathcal{P}}_n}|\langle\phi|P|\phi\rangle|^{2\alpha}$, with associated entropy $M_{\alpha}(\phi)\coloneqq(1-\alpha)^{-1}\log_2P_{\alpha}(\phi)$ and linearized form $M_{\alpha}^{\mathrm{lin}}(\phi)\coloneqq1-P_{\alpha}(\phi)$. These quantities are faithful and Clifford invariant, with $P_{\alpha}$ multiplicative and $M_{\alpha}$ additive. For every integer $\alpha\geq2$, $M_{\alpha}$ is monotone under deterministic pure-state stabilizer protocols, while $M_{\alpha}^{\mathrm{lin}}$ is strongly monotone: if $\ket{\phi}\mapsto\{p_i,\ket{\phi_i}\}_i$, then $M_{\alpha}^{\mathrm{lin}}(\phi)\geq\sum_i p_iM_{\alpha}^{\mathrm{lin}}(\phi_i)$ \cite{LeoneBittel2024}. Suitable convex-roof constructions extend them to mixed states.
Stabilizer Rényi entropies also have operational meaning in property testing, quantifying distinguishability from stabilizer and Haar-random behavior \cite{Bittel_2026}. They further admit experimental estimation schemes on quantum processors and efficient numerical methods for broad classes of many-body states
\cite{OlivieroLeoneHamma2022,Tarabunga_2023,PhysRevLett.131.180401,Tarabunga_2024,Turkeshi_2025,ding2025evaluatingmanybodystabilizerrenyi}.
We also define the stabilizer group of $\phi$ as $\mathcal S(\phi)\coloneqq\{P\in\mathcal P_n:P\ket{\phi}=\ket{\phi}\}$. The stabilizer nullity is then defined as $\nu(\phi)\coloneqq n-\log_2|\mathcal S(\phi)|$. It is faithful, additive under tensor products, and cannot increase along any nonzero pure-state stabilizer branch.

In our magic-state concentration setting, we take as main target the $\CCZ$ magic state
\begin{align}
\ketccz \coloneqq \frac{1}{\sqrt{8}} \sum_{x_1,x_2,x_3\in\{0,1\}} (-1)^{x_1x_2x_3} \ket{x_1x_2x_3}.
\end{align}
The $\CCZ$ state is a natural multiqubit magic resource: its gate is locally Clifford-equivalent to Toffoli and, together with Clifford operations, enables universal quantum computation \cite{shi2002toffolicontrollednotneedlittle}. It is also closely related to the standard $T$ state, $\kett\coloneqq(\ket0+\e^{\ii\pi/4}\ket1)/\sqrt2$: while exact conversion from finitely many $\CCZ$ states to $T$ states is impossible under stabilizer operations alone \cite{Beverland_2020}, catalytic conversion $\ket{\CCZ}\to_{\mathrm{cat}} \ket{T}^{\otimes 2}$ with a $\ket{T}$ catalyst is possible \cite{GidneyFowler2019,Beverland_2020}.

In this work, we establish a sharp six-copy threshold for exact universal magic state concentration. Fewer than six copies cannot produce any fixed non-stabilizer state with nonzero probability for every pure non-stabilizer qubit state, whereas six copies suffice to produce one exact $\CCZ$ state. Among all universal six-copy stabilizer protocols that output a $\CCZ$ state, our protocol achieves the optimal success probability (see \cref{thm:six-copy-optimality}). We further construct a universal eight-copy protocol $\Lambda_8$, that doubles the optimal six-copy block success probability (see \cref{thm:eight-copy-optimality}). Explicitly, our protocols achieve
\begin{align}
\begin{aligned}
\Pr_{\mathrm{opt}}(\psi^{\otimes 6} \to \CCZ) &= \frac{1}{3}\Mlin(\psi), \\
\Pr_{\Lambda_8}(\psi^{\otimes 8} \to \CCZ) &= \frac{2}{3}\Mlin(\psi).
\end{aligned}
\end{align}
Thus, the linearized order-three stabilizer Rényi entropy $\Mlin$ acquires a direct operational interpretation as the optimal probability of extracting one exact $\CCZ$ state at the minimal block size. For $n=1$, $\Mlin$ satisfies $0\leq\Mlin(\psi)\leq 4/9$ and attains its maximum at the eight states in the Clifford orbit of the Bravyi--Kitaev $T$-type state $T_{\mathrm{BK}}$ \cite{BravyiKitaev2005}. Catalytic conversion \cite{GidneyFowler2019} also yields the corresponding success probabilities for catalytic $T$-state concentration (\cref{cor:catalytic-t}).
We further show that the role of $\Mlin$ is structural: up to nine input copies, the success probability of conversion into any fixed non-stabilizer output necessarily scales proportionally to $\Mlin$ (see \cref{prop:main69}). As a direct consequence of our protocols, a single fixed stabilizer procedure enables exact universal quantum computation via $\CCZ$ injection from every pure non-stabilizer qubit state (\cref{cor:uqc}); only its success probability, and hence the required overhead, depends on the input state. The same protocols also apply to arbitrary six- and eight-qubit mixed states supported on the symmetric subspace, producing an exact $\CCZ$ state whenever they succeed (\cref{cor4main}).

Parallel repetition then yields a universal asymptotic protocol with exponentially vanishing failure probability (\cref{cor2:rates}). Moreover, the achieved asymptotic rates scale optimally up to logarithmic factors (\cref{prop:mainlow}). This gives $\Mlin$ a direct operational interpretation also in the asymptotic setting.

In the following sections, we discuss the main technical results summarized above in detail. Rigorous proofs are deferred to the appendix.

\parhead{Main results.---} Our construction has a direct geometric picture. We first define the stabilizer-orthogonal symmetric subspace
\begin{align}
\mathcal{M}_k \coloneqq \left\{ \ket*{\Phi} \in \Sym^k(\mathbb{C}^2) : \, \braket*{s^{\otimes k}}{\Phi} = 0,\, \forall s \in \mathrm{Stab_1}  \right\},
\label{eq:stabot}
\end{align}
where $\mathrm{Stab}_1$ are the six vertices of the single-qubit stabilizer octahedron, and $\Sym^k(\mathbb{C}^2)$ is symmetric subspace of $k$ qubits. More generally, $\mathrm{Stab}_n$ is the set of all $n$-qubit stabilizer states. By construction, $\mathcal{M}_k$ is Clifford-invariant under the action $C^{\otimes k}$ with $C \in \mathcal{C}_1$. We will also denote the respective subspace projectors as $\Pi_{\mathcal{M}_k},\Pi_{\Sym^k(\mathbb{C}^2)}$. These will both commute with the $k$-copy Clifford action.
Let now $\ket{\psi}=a\ket{0}+b\ket{1}$ with $a,b\in\mathbb{C}$ and $|a|^2+|b|^2=1$ be the unknown input qubit state. To each $\ket*{\Phi} \in \Sym^k(\mathbb{C}^2)$, we can associate the homogeneous degree-$k$ polynomial $p_{\Phi}(a,b) \coloneqq \bra*{\Phi}(a\ket{0}+b\ket{1})^{\otimes k} \in \mathbb{C}[a,b]_k$. Hence $\ket*{\Phi} \in \mathcal{M}_k$ precisely when $p_{\Phi}$ vanishes on the respective stabilizer rays corresponding to $\mathrm{Stab_1}$. The corresponding minimal-degree homogeneous vanishing polynomial is $f_6(a,b)\coloneqq ab(a^4-b^4)$, which is connected to the linearized order-three stabilizer Rényi entropy $\Mlin(\psi)$ by the simple relation (see \cref{app:linstab})
\begin{align}
\Mlin(\psi) = 6 |f_6(a,b)|^2.
\end{align}

Now, since any exact magic-state concentration protocol acting on $\psi^{\otimes k}$ must produce a pure non-stabilizer state on each successful branch $r$, we may fine-grain the protocol so that every such branch is represented by a single Kraus operator $K_r$. Universality then requires $K_r$, when restricted to the symmetric subspace, to vanish on all stabilizer inputs and hence to have input support contained in $\mathcal{M}_k$. The structure of $\mathcal{M}_k$, and in particular suitable basis elements, will therefore guide our choice of stabilizer projections in the explicit protocols. As we show below, these projections can be implemented by Pauli measurements whose accepted branches eliminate the input-state dependence and yield fixed non-stabilizer outputs, all Clifford-equivalent to $\CCZ$. We begin with the six-copy case.

\begin{theorem}[Six-copy threshold and optimality. Informal version of \cref{thm:sixprob,thm:opt6}]
\label{thm:six-copy-optimality}
For an arbitrary single-qubit pure state $\psi$, there exists an universal exact $\CCZ$ magic-state concentration protocol $\Lambda_6$ acting on $k=6$ copies of
$\psi$ preparing an exact $\CCZ$ state with success probability
\begin{align}
\Pr_{\Lambda_6}(\psi^{\otimes 6}\to \CCZ)=\frac{1}{3}\Mlin(\psi).
\end{align}
Furthermore, among all other $k$-copy protocols of the same class:
\begin{enumerate}[label=(\roman*),leftmargin=1.5em,itemsep=0pt,topsep=2pt]
    \item for $k=6$, the above success probability is optimal, and no accepted branch can contain two $\CCZ$ states;
    \item for every $k<6$, the optimal success probability is zero.
\end{enumerate}
\begin{proof}[Proof sketch]
In what follows, since the input is assumed to have the form $\psi^{\otimes 6}$, we restrict our attention to the symmetric subspace $\Sym^6(\mathbb{C}^2)$; all subspaces below are understood as subspaces thereof. Consider now the subspace $\mathcal{M}_6$ defined in \cref{eq:stabot}. In the appendix, we show that the joint Pauli measurement of $X^{\otimes 6}$ and $Z^{\otimes 6}$, with respective syndromes $x_6,z_6\in\{-1,+1\}$ and associated joint eigenspaces $\mathcal{H}_6^{(x_6,z_6)}$, yields the orthogonal decomposition
\begin{align}
\Sym^6(\mathbb{C}^2)=\mathcal{H}_6^{(-1,-1)} \oplus \mathcal{M}_6^\perp,
\end{align}
where the orthogonal complement is taken within $\Sym^6(\mathbb{C}^2)$. In particular,
$\mathcal{M}_6=\mathcal{H}_6^{(-1,-1)}$. A direct calculation shows that the probability of projecting onto $\mathcal{M}_6$ is $\Mlin(\psi)/2$. Moreover, $\dim(\mathcal{M}_6)=1$, so conditioned on this outcome the input-state dependence is completely removed. We then show that a fixed stabilizer protocol involving three additional Pauli measurements converts the unique state spanning $\mathcal{M}_6$ into a $\CCZ$ state with probability $2/3$. Moreover, the accepted measurement branches differ only by transversal Clifford actions of the form $C^{\otimes 3}$ or $C^{\otimes 6}$. Multiplying the two success probabilities, we get the achievability result.

To prove the converse, note that by the argument of the previous paragraph, any exact successful concentration branch corresponds to a Kraus operator $K_r$ with input support contained in $\mathcal{M}_6$. A simple argument using the polynomial representation $\Phi \to p_{\Phi}$ shows that $\dim(\mathcal{M}_k)=0$ when $k<6$, hence no successful branch exists. On the other hand, when $k=6$, as discussed above, there is only one vector $\ket{A_6}$ spanning $\mathcal{M}_6$. In the appendix we show that the optimal conversion probability starting from $\ket{A_6}$ is indeed $2/3$, hence our protocol is optimal. A no-go via the stabilizer nullity further shows that at most one $\CCZ$ state can be obtained.
\end{proof}
\end{theorem}

This protocol fits naturally into the Clifford commutant framework \cite{Gross_2021,Bittel_2026}, since the effective measurement on $\psi^{\otimes 6}$ is $\Pi_{\mathcal{M}_6}$ and it lies in the six-copy commutant. Restricted to the symmetric subspace, the first efficiently measurable Clifford-invariant operator is $\Omega_6=(I+X^{\otimes 6}+Y^{\otimes 6}+Z^{\otimes 6})/2$ \cite{Bittel_2026}. In our case, the first measurement in the protocol corresponds to the projector
\begin{align}
\Tilde{\Omega}_6 \coloneqq \frac{I-\Omega_6}{2} = \frac{1}{4}\left(I-X^{\otimes 6}-Y^{\otimes 6}-Z^{\otimes 6}\right),
\end{align}
and one has
\begin{align}
\operatorname{Tr} (\psi^{\otimes 6} \Tilde{\Omega}_6 ) = \frac{1}{2} \Mlin(\psi).
\end{align}

 \cref{thm:six-copy-optimality} can also be understood through the finite-copy stabilizer-testing results of Ref.~\cite{Bittel_2026}. There, no nontrivial test distinguishing stabilizer from non-stabilizer states exists with up to five copies, while at six copies the POVM element $\Tilde{\Omega}_6$ yields the optimal test. This mirrors our six-copy threshold: below six copies, the Clifford commutant cannot resolve the non-stabilizer sector, precluding universal concentration, whereas at six copies $\Tilde{\Omega}_6$ detects precisely the invariant sector exploited by our protocol.

 Let us now turn to the eight-copy case.

\begin{theorem}[Eight-copy protocol. Informal version of \cref{thm:eightac}] \label{thm:eight-copy-optimality} There exists an universal exact $\CCZ$ magic-state concentration protocol $\Lambda_8$ acting on $k=8$ copies of
$\psi$ preparing an exact $\CCZ$ state with success probability
\begin{align}\label{eq:prob8}
\Pr_{\Lambda_8}(\psi^{\otimes 8} \to \CCZ) &= \frac{2}{3}\Mlin(\psi).
\end{align}
\begin{proof}[Proof sketch]
In this case, the joint Pauli measurement $X^{\otimes 8},Z^{\otimes 8}$ does not project by itself onto $\mathcal{M}_{8}$ as in the six-copy case. However, in the appendix we show that it decomposes $\mathcal{M}_{8}$ into the three orthogonal syndrome sectors
\begin{align}
\mathcal{M}_{8} = \mathcal{M}_{8}^{(+1,-1)} \oplus \mathcal{M}_{8}^{(-1,+1)} \oplus \mathcal{M}_{8}^{(-1,-1)}, 
\end{align}
where $\mathcal{M}_{8}^{(x_8,z_8)} \coloneqq \mathcal{M}_{8} \cap \mathcal{H}_8^{(x_8,z_8)}$. The syndrome sector $(+1,+1)$ is instead orthogonal to $\mathcal{M}_{8}$. Furthermore, each sector is one-dimensional. While in principle the input $\psi^{\otimes 8}$ belongs to the larger symmetric subspace $\Sym^8(\mathbb{C}^2)$, we show that subsequent Pauli measurements remove components orthogonal to $\mathcal{M}_{8}$ by post-selection, and hence the overall effect is to filter each syndrome sector onto its one-dimensional component $\mathcal{M}_{8}^{(x_8,z_8)}$. 
Every accepted branch then yields a state Clifford-equivalent to $\CCZ$. The total accepted fraction of the input weight in $\mathcal M_8$ is $4/7$.
The total success probability therefore is
\begin{align}
\Pr_{\Lambda_8}(\psi^{\otimes 8} \to \CCZ) = \frac{4}{7} \Tr\left( \Pi_{\mathcal{M}_{8}} \psi^{\otimes 8}\right) = \frac{2}{3} \Mlin(\psi),
\end{align}
thus concluding the proof of achievability.
\end{proof}
\end{theorem}

Both protocols are structurally Clifford-invariant, since their accepted POVM elements, restricted to the symmetric subspace, are proportional to $\Pi_{\mathcal M_k}$, which commutes with the $k$-copy Clifford action. This implies that a Clifford twirl $\psi^{\otimes k}\to \mathbb{E}_{C\sim\mathcal{C}_1}[(C \psi C^\dag)^{\otimes k}]$ does not affect the success probability.

We also remark that a similar $k=7$ protocol can be considered, but it currently does not improve on the six-copy success probability. This is consistent with the Clifford commutant structure \cite{Bittel_2026}: at seven copies, no new invariant operator appears beyond $\Tilde{\Omega}_6$, acting nontrivially on only six tensor factors.

While the six-copy protocol derived in \cref{thm:six-copy-optimality} achieves optimal success probability, the eight-copy protocol is not currently known to be optimal. However, using general properties of $\mathcal{M}_k$, we can obtain the following upper bound, valid for any stabilizer protocol converting into an arbitrary non-stabilizer state starting with up to nine input copies.

\begin{proposition}[Upper bounds up to $k=9$. Informal of \cref{prop:opt6to9}]\label{prop:main69} For $6\leq k\leq9$, the optimal success probability for exact conversion into any fixed non-stabilizer state $\ket{\tau}$ satisfies
\begin{align}
\Pr_{\mathrm{opt}}(\psi^{\otimes k} \to \tau) \leq \frac{7(k-5)}{2(k+1)} \Mlin(\psi). 
\end{align}
\end{proposition}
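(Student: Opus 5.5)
The plan is to bound the success probability by the weight of $\psi^{\otimes k}$ in $\mathcal{M}_k$, and then bound that weight by $\Mlin(\psi)$. As argued before \cref{thm:six-copy-optimality}, fine-grain any exact protocol into single Kraus operators $K_r$ on the successful branches. Restricted to $\Sym^k(\mathbb{C}^2)$, each $K_r$ must annihilate every $s^{\otimes k}$ with $s\in\mathrm{Stab}_1$, since stabilizer inputs cannot yield the non-stabilizer $\tau$ with nonzero probability. Hence the input support of $K_r$ lies in $\mathcal{M}_k$, and $\sum_r K_r^\dagger K_r\le \Pi_{\mathcal{M}_k}$ on the symmetric subspace. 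This gives
\begin{align}
\Pr_{\mathrm{opt}}(\psi^{\otimes k}\to\tau)\le \Tr(\Pi_{\mathcal{M}_k}\psi^{\otimes k}).
\end{align}
It then suffices to show $\Tr(\Pi_{\mathcal{M}_k}\psi^{\otimes k})\le \frac{7(k-5)}{2(k+1)}\Mlin(\psi)$ for $6\le k\le 9$.

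\textbf{Structure of $\mathcal{M}_k$.} In the polynomial picture, $p_\Phi$ vanishes on the six stabilizer rays exactly when $f_6\mid p_\Phi$, because the six rays are distinct roots of $f_6$. Therefore $\mathcal{M}_k\cong f_6\cdot\mathbb{C}[a,b]_{k-6}$ and $\dim\mathcal{M}_k=k-5$. Write $\langle\Phi|\psi^{\otimes k}\rangle=\overline{p_\Phi(a,b)}$, up to conjugation conventions. Choose an orthonormal basis $\{\Phi_j\}$ of $\mathcal{M}_k$ with $p_{\Phi_j}=f_6\,g_j$, where $g_j\in\mathbb{C}[a,b]_{k-6}$. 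Then
\begin{align}
\Tr(\Pi_{\mathcal{M}_k}\psi^{\otimes k})=|f_6(a,b)|^2\sum_j|g_j(a,b)|^2=\frac{\Mlin(\psi)}{6}\,G_k(a,b),
\end{align}
with $G_k\coloneqq\sum_j|g_j|^2$. The $g_j$ are orthonormal with respect to the pulled-back inner product $\langle g,h\rangle_k\coloneqq\langle f_6 g, f_6 h\rangle_{\Sym^k}$. Concretely, this is the Bombieri-type inner product on degree-$k$ binary forms, for which monomials are orthogonal with $\|a^mb^{k-m}\|^2=\binom{k}{m}^{-1}$. So $G_k$ is the reproducing (Bergman-type) kernel on the diagonal for the $(k-5)$-dimensional space $f_6\mathbb{C}[a,b]_{k-6}$ under this weighting.

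\textbf{Bounding the kernel.} The task reduces to $\max_{|a|^2+|b|^2=1}G_k\le\frac{21(k-5)}{k+1}$. I would check the case $k=6$ first as a consistency test. Here $\|f_6\|^2=\binom{6}{1}^{-1}+\binom{6}{5}^{-1}=1/3$, so $G_6=3$ and $\Tr(\Pi_{\mathcal{M}_6}\psi^{\otimes 6})=\Mlin/2$, matching $\frac{7\cdot1}{2\cdot7}=\frac12$. For $k=7,8,9$ there are two, three and four basis functions respectively. I would compute the Gram matrix of $\{f_6a^mb^{k-6-m}\}$ in the Bombieri norm, invert it, and bound the resulting $G_k$.

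To simplify, use Clifford invariance: $\Pi_{\mathcal{M}_k}$ commutes with $C^{\otimes k}$, so $G_k$ is invariant under the octahedral group acting on the Bloch sphere. For $k-6\le 3$, the invariant functions of degree $\le 3$ in the Bloch vector $n$ built from $|g|^2$ span only constants and possibly $n_x^4+n_y^4+n_z^4$-type terms, which have degree 4 and are excluded for $k\le 9$.

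This is where I expect the main obstacle. I would try to show that the octahedral invariance forces $G_k$ to be constant on the sphere for $k\le9$. The harmonic components of $|g|^2$ have degree $\le k-6\le3$, and the octahedral group has no nonconstant invariant harmonics of degree 1, 2 or 3, the first being of degree 4. So $G_k$ is constant, and it equals its sphere average:
\begin{align}
G_k=\frac{\mathbb{E}_\psi\Tr(\Pi_{\mathcal{M}_k}\psi^{\otimes k})}{\mathbb{E}_\psi\Mlin(\psi)/6}=\frac{(k-5)/(k+1)}{\mathbb{E}_\psi|f_6|^2}.
\end{align}
The numerator uses $\mathbb{E}_\psi\psi^{\otimes k}=\Pi_{\Sym^k}/(k+1)$. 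The denominator is a Haar integral, $\mathbb{E}|ab(a^4-b^4)|^2=2\cdot\frac{1!\,5!}{7!}=\frac1{21}$; I would double-check this, along with the cross term, which vanishes. This yields $G_k=\frac{21(k-5)}{k+1}$ and hence the stated bound, with equality for the projector bound.
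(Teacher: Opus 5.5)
Your proof is correct, but it gets the key proportionality by a different route from the paper. Both arguments begin the same way, bounding the success probability by $\Tr(\Pi_{\mathcal{M}_k}\psi^{\otimes k})$ via $K_r\Pi_{\Sym^k(\mathbb{C}^2)}=K_r\Pi_{\mathcal{M}_k}$. Both also end with the same Haar normalization: your $\mathbb{E}_\psi|f_6|^2=1/21$ is the paper's $\tfrac12\int\Mlin=1/7$. The difference lies in how each shows that $\Tr(\Pi_{\mathcal{M}_k}\psi^{\otimes k})$ is a fixed multiple of $\Mlin(\psi)$.

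The paper argues at the operator level. It forms $A_k=\Pi_{\Sym^k(\mathbb{C}^2)}(\tilde\Omega_6\otimes I^{\otimes(k-6)})\Pi_{\Sym^k(\mathbb{C}^2)}$, which lies in the Clifford commutant, is supported on $\mathcal{M}_k$, and has $\langle\psi^{\otimes k}|A_k|\psi^{\otimes k}\rangle=\Mlin/2$. It then builds the intertwiner $\mathcal{M}_k\cong\chi_6\otimes\Sym^{k-6}(\mathbb{C}^2)$. Because the Clifford group is a unitary $3$-design, $\Sym^{m}(\mathbb{C}^2)$ is Clifford-irreducible for $m\le3$, and Schur's lemma gives $A_k=\lambda_k\Pi_{\mathcal{M}_k}$.

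You argue at the level of functions on the Bloch sphere. After factoring out $|f_6|^2$, the kernel diagonal $G_k$ is an octahedrally invariant polynomial of harmonic degree at most $k-6\le3$. The chiral octahedral group has no nonconstant invariant harmonics in degrees $1,2,3$, which a character count confirms. Hence $G_k$ is constant, and averaging fixes its value.

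To be fully explicit about the invariance of $G_k$: it is the quotient of two Clifford-invariant functions, $\Tr(\Pi_{\mathcal{M}_k}\psi^{\otimes k})$ and $|f_6|^2=\Mlin/6$. The quotient is invariant off the zero set of $f_6$, and since $G_k$ is a polynomial, the invariance extends everywhere by continuity. Once the invariance argument is in place, the Gram-matrix computations you planned for $k=7,8,9$ are unnecessary.

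What each route buys:
\begin{itemize}
\item Your route is more elementary. It needs neither $\tilde\Omega_6$, nor the intertwiner, nor the $3$-design theorem.
\item It also makes the $k=10$ breakdown transparent: the degree-$4$ invariant $n_x^4+n_y^4+n_z^4$ can first appear in $G_{10}$.
\item The paper's route yields an operator identity tying $\Pi_{\mathcal{M}_k}$ to the efficiently measurable commutant element $\tilde\Omega_6$. That is what links the bound to the stabilizer-testing and commutant framework the paper invokes.
\end{itemize}
Underneath, both rest on the same fact: the single-qubit Clifford group has no nontrivial invariants below degree $4$.
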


In particular, this gives a $7\Mlin(\psi)/6$ upper bound when $k=8$. 
Starting at $k=10$, additional nontrivial and efficiently measurable Clifford invariants, independent of $\Mlin$, appear within $\mathcal M_k$~\cite{Bittel_2026}. Thus, up to nine input copies, the success probability for conversion into any fixed non-stabilizer target necessarily scales at most linearly with $\Mlin$, showing that the scaling achieved by our protocols is optimal within that regime.

\parhead{Implications.---} We now discuss some direct implications of our results. While converting an arbitrary number of copies of a pure state into exact $T$ via stabilizer operations is impossible \cite{Beverland_2020}, our results, together with the catalytic procedure of Ref.\,\cite{GidneyFowler2019} directly imply the following.

\begin{corollary}[Universal exact catalytic $T$-state conversion]
\label{cor:catalytic-t}
Assume access to a single $T$-state catalyst. Then the six- and eight-copy protocols induce exact catalytic $T$-state conversions, with net balance
\begin{align}
\Pr_{\Tilde{\Lambda}_{6}}(\psi^{\otimes 6}\to_{\mathrm{cat}} T^{\otimes 2}) &= \frac{1}{3}\Mlin(\psi), \\
\Pr_{\Tilde{\Lambda}_{8}}(\psi^{\otimes 8}\to_\mathrm{cat} T^{\otimes 2}) &= \frac{2}{3}\Mlin(\psi).
\end{align}
\begin{proof}
Compose each successful exact-$\ket{\CCZ}$ branch with the deterministic catalytic stabilizer conversion \cite{GidneyFowler2019} $\ket{\CCZ}\ket{T} \longmapsto \ket{T}^{\otimes 2}\ket{T}$.
Since the catalytic conversion succeeds deterministically and returns the catalyst exactly, it leaves the success probabilities unchanged and produces
two net $T$ states per successful $\ket{\CCZ}$ output.
\end{proof}
\end{corollary}

Repeating our finite-copy protocols over independent blocks immediately yields asymptotic concentration rates. We define $R(\psi \to\CCZ)$ as the supremum of achievable exact $\CCZ$ rates over protocol families whose total success probability converges to one as $k\to\infty$ (see \cref{sec:protdef} for the formal definition). The catalytic rate $R(\psi \to_{\mathrm{cat}} T)$ is defined similarly.  We then obtain the following lower bounds

\begin{corollary}[Asymptotic rates] \label{cor2:rates} The eight-copy protocol provides the following achievability statements on asymptotic magic state concentration protocols.
\begin{align}
\begin{aligned}
R(\psi \to \CCZ) &\geq \frac{1}{12} \Mlin(\psi), \\
R(\psi \to_{\mathrm{cat}} T) &\geq \frac{1}{6} \Mlin(\psi).
\end{aligned}
\end{align}
\begin{proof}
Split the $k$ copies into $N=\lfloor k/8\rfloor$ blocks and apply the eight-copy protocol independently. Then the number of $\CCZ$ in the output is given by the binomial distribution $S_k\sim\mathrm{Bin}(N,2\Mlin(\psi)/3)$ , so $S_k/k\xrightarrow{\Pr}\Mlin(\psi)/12$, proving the first bound. Catalytic conversion of each successful $\CCZ$ into two net $T$ states doubles the yield and gives the second.
\end{proof}
\end{corollary}

One might asks whether the asymptotic rates of \cref{cor2:rates} are by any chance tight. In the next theorem, we show that this is indeed the case up to logarithmic factors. The result is obtained using simple bounds based on the relative entropy of magic \cite{Rubboli_2024,HORODECKI_2012,Veitch_2014}.

\begin{proposition}[Asymptotic upper bound. Informal of \cref{prop:globalbound} ]\label{prop:mainlow}
    For any pure non-stabilizer qubit state $\psi$
    \begin{align}
    R(\psi \to \CCZ) \leq C \Mlin(\psi) \log_2\left( \frac{1}{\Mlin(\psi)}\right),
    \end{align}
    with constant $C < 1.81$.
\end{proposition}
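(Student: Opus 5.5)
We bound the rate through a monotone that is additive on tensor powers and finite on the target: the (regularized) relative entropy of magic $D_{\mathrm{Stab}}(\rho)\coloneqq\min_{\sigma\in\mathrm{conv}(\mathrm{Stab}_n)}D(\rho\|\sigma)$. It is monotone on average under stabilizer operations, including measurement branches, and it is subadditive, so $D_{\mathrm{Stab}}(\psi^{\otimes k})\le k\,D_{\mathrm{Stab}}(\psi)$. On the output side we use $D_{\mathrm{Stab}}(\CCZ^{\otimes m})\ge c\,m$ for some $c>0$, for instance through a lower bound by the additive quantity $M_{\min}$ or the stabilizer extent \cite{Rubboli_2024}.

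First we run the converse argument. Take any protocol family with total success probability $\to1$ that outputs $m_k$ exact $\CCZ$ states. Flag the outcome classically and pad failed branches with a stabilizer state. Monotonicity then gives $k\,D_{\mathrm{Stab}}(\psi)\ge \Pr_{\mathrm{succ}}\,D_{\mathrm{Stab}}(\CCZ^{\otimes m_k})\ge \Pr_{\mathrm{succ}}\, c\, m_k$. Letting $k\to\infty$ yields $R(\psi\to\CCZ)\le D_{\mathrm{Stab}}(\psi)/c$. If the exact-output requirement is relaxed to varying $m$ per branch, we use convexity and flag monotonicity in the same way.

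The main step is a single-qubit estimate $D_{\mathrm{Stab}}(\psi)\le C'\,\Mlin(\psi)\log_2(1/\Mlin(\psi))$ for small $\Mlin$. Away from stabilizer states the bound is trivial, since $D_{\mathrm{Stab}}\le1$ and $\Mlin$ is bounded below there, so only a constant needs tuning. Near a stabilizer state we use Clifford covariance to put $\psi$ near $\ket0$, with $|b|=\epsilon$. Then $\Mlin=6|ab|^2|a^4-b^4|^2\approx6\epsilon^2$, because the factor $|a^4-b^4|\approx1$.

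For the estimate we choose the explicit feasible $\sigma=(1-q)\ket0\!\bra0+q\,\ket1\!\bra1$, or, better, a mixture with nearby $\ket{\pm}$ and $\ket{\pm i}$ states aligned with the phase of $b$. We set $q\sim\epsilon^2$. Then $D(\psi\|\sigma)=-\bra\psi\log\sigma\ket\psi=-(1-\epsilon^2)\log(1-q)-\epsilon^2\log q\approx\epsilon^2\log_2(1/\epsilon^2)+O(\epsilon^2)$. This is $\Theta(\Mlin\log(1/\Mlin))$ and explains the logarithmic factor.

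Optimizing $q$ and tracking exact constants, the expression is $h$-like, namely $\epsilon^2\log(1/\epsilon^2)+\epsilon^2\log e$ type terms. Combining this with the $\CCZ$ lower bound constant $c$ then yields $C=C'/c<1.81$.

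We expect the main obstacle to be a clean, fully rigorous constant. This needs a sharp lower bound on $D_{\mathrm{Stab}}(\CCZ)$ that is additive over copies; if $M_{\min}(\CCZ)=\log_2(16/9)$ is available this gives $c$ directly. It also needs a uniform inequality $D(\psi\|\sigma_\psi)\le C'\Mlin\log_2(1/\Mlin)$ over the whole Bloch sphere, not just asymptotically. For the latter we would first reduce by Clifford symmetry to one octant, then verify the inequality analytically near the stabilizer vertices and numerically/by monotonicity on the compact remainder.
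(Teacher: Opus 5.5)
Your strategy matches the paper's. You bound $R(\psi\to\CCZ)\le D_{\mathbb M}(\psi)/\log_2(16/9)$ using monotonicity and subadditivity of the relative entropy of magic, then bound $D_{\mathbb M}(\psi)$ with the dephased stabilizer mixture. Your optimal $q$ is exactly $|b|^2=(1-s)/2$ with $s=\max\{|x|,|y|,|z|\}$, so $D(\psi\|\sigma)=h_2((1-s)/2)$ holds exactly. Your flagged-branch converse is a fine self-contained substitute for the cited rate bound.

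\textbf{The gap.} The step that carries the statement's quantitative content is a \emph{uniform} bound over the whole Bloch sphere with an explicit constant. You only do the expansion near the stabilizer vertices, then assert $C=C'/c<1.81$, and propose handling the rest of the sphere through $D_{\mathbb M}\le 1$. That cannot work. Since $\Mlin\log_2(1/\Mlin)\le 1/(e\ln 2)\approx 0.531$ for every value of $\Mlin$, using only $D_{\mathbb M}\le 1$ at any point forces $C\ge e\ln 2/\log_2(16/9)\approx 2.27$. You must use the actual value $h_2((1-s)/2)$ everywhere, and a numerical check on a compact region is not yet a proof.

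A smaller point concerns the output side. The stabilizer extent bounds $D_{\mathbb M}$ from above, not below, so it cannot supply $c$. What does supply $c=\log_2(16/9)$ is either the stabilizer-fidelity bound $D_{\mathbb M}(\phi)\ge -\log_2 F(\phi)$ together with multiplicativity of $F$ for $\CCZ$, or the additivity result of Rubboli et al.\ that the paper uses.

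\textbf{The missing idea} is a global comparison between $s$ and $\Mlin$:
\begin{enumerate}[label=(\roman*)]
\item Since $x^6+y^6+z^6\le s^4(x^2+y^2+z^2)=s^4$, one has $\Mlin\ge(1-s^4)/2$.
\item With $q=(1-s)/2$ and $s\ge1/\sqrt3$, this gives $q/\Mlin\le 1/((1+s)(1+s^2))\le\gamma\coloneqq 3(3-\sqrt3)/8$.
\item Use $h_2(q)\le q\log_2(e/q)$, which follows from $-(1-q)\ln(1-q)\le q$, and note that $q\mapsto q\log_2(e/q)$ is increasing on $[0,1]$. Then $D_{\mathbb M}(\psi)\le\gamma\Mlin\log_2(e/(\gamma\Mlin))$ on the whole sphere.
\item Finally, $\Mlin\le 4/9$ gives $\log_2(1/\Mlin)\ge\log_2(9/4)$. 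This absorbs the additive term $\log_2(e/\gamma)$ into a multiple of $\log_2(1/\Mlin)$.
\end{enumerate}
The result is $C=\frac{\gamma}{\log_2(16/9)}\bigl[1+\log_2(e/\gamma)/\log_2(9/4)\bigr]\approx 1.804<1.81$, with no numerics needed.
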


The above theorem thus gives $\Mlin$ a direct asymptotic operational meaning: it determines the optimal concentration-rate scaling up to logarithmic factors. Remarkably, the protocols introduced in \cref{thm:six-copy-optimality,thm:eight-copy-optimality} achieve this scaling. Finally, we also have

\begin{corollary}[Universal quantum computation from unknown pure qubit magic]\label{cor:uqc}
Any unknown pure non-stabilizer qubit state enables universal quantum computation through exact CCZ-gate injection by a fixed stabilizer protocol with finite state-dependent overhead.
\begin{proof}
The six-copy protocol succeeds with probability $\Mlin(\psi)/3>0$ for every pure non-stabilizer $\psi$. Repetition therefore produces exact $\ketccz$ states at finite expected cost, while Clifford operations together with $\CCZ$ injection are universal \cite{shi2002toffolicontrollednotneedlittle}.
\end{proof}
\end{corollary}

In comparison with Reichardt's universality results \cite{Reichardt2009,Reichardt2005}, where the distillation strategy may depend on the input Bloch-sphere direction, our protocol is fixed and requires neither prior alignment nor knowledge of that direction.

The proofs of \cref{thm:six-copy-optimality,thm:eight-copy-optimality} extend directly to arbitrary states supported on the symmetric subspace. There, the measurement effectively implement $(2/3)\Pi_{\mathcal M_6}$ and $(4/7)\Pi_{\mathcal M_8}$, so any such mixed state is mapped upon success to the same exact $\CCZ$ output, with probability set by its weight on $\mathcal M_k$. This gives the following result.

\begin{corollary}[Concentration from symmetric mixed states]\label{cor4main}
Let $\rho_k$ be any mixed state supported on $\Sym^k(\mathbb C^2)$. For $k=6,8$, the corresponding protocols produce an exact $\CCZ$ state with probabilities
\begin{align}\label{eqmaincor4main}
\Pr_{\mathrm{opt}}(\rho_6\to\CCZ)
&=\frac23\Tr\!\left(\Pi_{\mathcal M_6}\rho_6\right),\\
\Pr_{\Lambda_8}(\rho_8\to\CCZ)
&=\frac47\Tr\!\left(\Pi_{\mathcal M_8}\rho_8\right).
\end{align}
\end{corollary}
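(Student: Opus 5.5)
The plan is to show that, for $k=6,8$, the accepted part of each protocol acts on the symmetric subspace exactly as a fixed multiple of $\Pi_{\mathcal M_k}$ followed by an input-independent map. Concretely, I will show that the accepted POVM element, compressed to $\Sym^k(\mathbb C^2)$, equals $c_k\Pi_{\mathcal M_k}$ with $c_6=2/3$ and $c_8=4/7$. I will also show that the post-measurement output on every accepted branch is the same exact $\CCZ$ state, after the branch-dependent Clifford correction. Both facts were established for pure inputs $\psi^{\otimes k}$ in the proofs of \cref{thm:six-copy-optimality,thm:eight-copy-optimality}. The only new ingredient is that those proofs never used the product form of the input, only that it lies in $\Sym^k(\mathbb C^2)$. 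Linearity then gives the result.

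\emph{Case $k=6$.} I will write any $\rho_6$ supported on $\Sym^6(\mathbb C^2)$ in the decomposition $\Sym^6(\mathbb C^2)=\mathcal H_6^{(-1,-1)}\oplus\mathcal M_6^\perp$ with $\mathcal M_6=\mathcal H_6^{(-1,-1)}=\mathrm{span}\{\ket{A_6}\}$. The first joint measurement of $X^{\otimes 6},Z^{\otimes 6}$ accepts only syndrome $(-1,-1)$. Its Kraus operator is the projector onto that joint eigenspace, and on $\Sym^6$ this is exactly $\Pi_{\mathcal M_6}=\ket{A_6}\!\bra{A_6}$. The accepted unnormalized state is therefore $\bra{A_6}\rho_6\ket{A_6}\,\ket{A_6}\!\bra{A_6}$. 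All coherences and the $\mathcal M_6^\perp$ block are killed, because the projector is rank one. The subsequent fixed three-measurement stage maps $\ket{A_6}$ to $\CCZ$ with probability $2/3$, with the Clifford corrections $C^{\otimes3}$ or $C^{\otimes 6}$ fixed per branch. Multiplying gives $\frac23\Tr(\Pi_{\mathcal M_6}\rho_6)$. Optimality, meaning the label ``opt'', follows as in \cref{thm:six-copy-optimality}. Any exact branch has input support in $\mathcal M_6$, which is one-dimensional, and $2/3$ is the optimal yield from $\ket{A_6}$.

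\emph{Case $k=8$.} This is the step needing care. Here the Kraus operators are not a single projector, and inputs need not be diagonal across sectors. For each accepted branch $r$, I will take the composite Kraus operator $K_r$, consisting of the syndrome projection onto $\mathcal H_8^{(x,z)}$ followed by the further Pauli measurements and correction. From the eight-copy proof I will use two facts. First, $K_r$ restricted to $\Sym^8$ annihilates $\mathcal M_8^\perp$: the later measurements post-select away the non-$\mathcal M_8$ component within each sector. Second, $K_r$ maps the unit vector spanning $\mathcal M_8^{(x,z)}$ to $\sqrt{q_r}\,\ket{\CCZ}\otimes\ket{\mathrm{junk}_r}$ up to phase, with fixed $\ket{\mathrm{junk}_r}$, and maps the other sectors to zero because the syndromes are orthogonal.

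With these facts, for each branch the operator $K_r\Pi_{\Sym}$ factors as $\ket{\CCZ}\otimes\ket{\mathrm{junk}_r}$ times $\sqrt{q_r}$ times the bra of the sector vector of that syndrome. Hence $K_r\rho_8K_r^\dagger\propto\CCZ$ exactly, for arbitrary mixed $\rho_8$ including cross-sector coherences. Summing, $\sum_r K_r^\dagger K_r$ compressed to $\Sym^8$ is block diagonal on the three sectors. Its weights are the per-sector acceptance fractions, which the eight-copy analysis shows are each $4/7$, since the statement of \cref{thm:eight-copy-optimality} gives $\frac47\Pi_{\mathcal M_8}$. The total probability is then $\frac47\Tr(\Pi_{\mathcal M_8}\rho_8)$.

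The main obstacle is verifying that the three sector weights are equal. If they were unequal, the POVM element would not be proportional to $\Pi_{\mathcal M_8}$. I would check this using Clifford covariance: the single-qubit Clifford action $C^{\otimes 8}$ permutes the syndromes $(+1,-1),(-1,+1),(-1,-1)$, mirroring the permutation of $X,Y,Z$. The protocol's branches are related by the same transversal Cliffords, so the acceptance fraction is sector-independent.
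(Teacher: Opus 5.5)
Your proposal is correct and follows essentially the same route as the paper. The paper observes that on $\Sym^k(\mathbb C^2)$ every accepted Kraus operator is rank one, of the form $\propto C\,\ketbra{\CCZ}{E}$ with $E=A_6$ or $E_\mu$. Hence the accepted effect is $\frac23\Pi_{\mathcal M_6}$ (resp. $\frac47\Pi_{\mathcal M_8}$), and by linearity any symmetric mixed state, including its cross-sector coherences, is mapped to an exact $\CCZ$. Your explicit handling of those coherences, and your equal-sector-weight argument via the transversal corrections $C_\mu^{\otimes 8}$, spell out what the eight-copy achievability proof already establishes. Your extra remark that the six-copy optimality carries over to symmetric mixed inputs is also sound by the same linearity.
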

Note that in general, it is not possible to project onto the symmetric subspace with a stabilizer protocol, therefore, \cref{cor4main} does not extend trivially to mixed states.

\parhead{Discussion.---} We introduced exact universal concentration of an unknown pure qubit magic state into a fixed target, here $\CCZ$. Unlike standard magic-state distillation \cite{BravyiKitaev2005,Reichardt2005,Reichardt2009}, our protocols require neither prior knowledge nor alignment of the input Bloch-sphere direction: a single stabilizer procedure applies universally, with Clifford-invariant success probability. They also admit a stabilizer-code interpretation: the measurements of $X^{\otimes 6},Z^{\otimes 6}$ and $X^{\otimes 8},Z^{\otimes 8}$ define parity-code syndromes, while subsequent stabilizer constraints further restrict the code space before decoding. Geometrically, these codes filter the stabilizer-orthogonal symmetric subspaces $\mathcal{M}_k$, isolating components that stabilizer operations map exactly to $\ket{\CCZ}$, without using transversal non-Clifford logical gates. The six-copy protocol is optimal, while optimality at eight copies remains open. Natural directions for future work include extending the protocols to arbitrary copy numbers, determining optimal asymptotic rates, and generalizing the framework to qudits and other target resources.

\let\savedaddcontentsline\addcontentsline
\def\addcontentsline#1#2#3{}

\section*{Acknowledgements}

The authors acknowledge insightful discussions with Salvatore F.E. Oliviero.
J.R. was supported by the BMFTR (PasQuops, QSolid, MuniQC-Atoms), the ML4Q excellence cluster, the Munich Quantum Valley (K-8), Berlin Quantum, the DFG (SPP 2514, project ID 563402549), and the European
Research Council (DebuQC).

\makeatletter
\@bibdataout@aps
\makeatother
\bibliographystyle{apsrev4-2}
\bibliography{biblio}

\let\addcontentsline\savedaddcontentsline

\clearpage
\appendix
\onecolumngrid
\renewcommand{\tocname}{Supplemental Material}
\tableofcontents

\section{Preliminaries}

\subsection{Stabilizer formalism}\label{sec:stabform}
In this section, we briefly summarize the stabilizer formalism. A more detailed overview can be found in Ref.\,\cite{gottesman2009introductionquantumerrorcorrection}.
In the manuscript, the $n$-qubit Pauli group $\mathcal{P}_n$ is composed of all $n$-fold tensor products of $I,X,Y,Z$ with phases $\pm1,\pm i$. We also denote with $\widehat{\mathcal{P}}_n\coloneqq \mathcal{P}_n/\{\pm I,\pm iI\}\simeq\{I,X,Y,Z\}^{\otimes n}$ the Pauli group modulo phase. When we write a quantity in terms of elements of $\widehat{\mathcal{P}}_n$, the value of the phase will not matter.
Given an $n$-qubit state $\ket{\phi}$, its stabilizer group is defined as
\begin{align}
\mathcal{S}(\phi) \coloneqq \{ P\in \mathcal{P}_n : \, P\ket{\phi}=\ket{\phi}\}.
\end{align}
This definition immediately implies that $\mathcal{S}(\phi)$ is an Abelian subgroup of $\mathcal{P}_n$, and $-I\notin \mathcal{S}(\phi)$, hence the corresponding Paulis can appear only with phases $\pm 1$. If $\mathcal{S}(\phi)$ is generated by $r$ independent commuting Paulis $S_1,...,S_r \in \mathcal{P}_n$, we write $\mathcal{S}(\phi) = \langle S_1,...,S_r \rangle$. Consequently, since the generators commute and $S_i^2=I$, we have $|\mathcal{S}(\phi)|=2^r$. We can also reverse the definition: given an Abelian subgroup $\mathcal{S} \subseteq \mathcal{P}_n$ such that $-I\notin \mathcal{S}$, we define the corresponding stabilizer code as the subspace $\mathcal{T}$ of the $n$-qubit states stabilized by $\mathcal{S}$:
\begin{align}
\mathcal{T}(\mathcal{S}) \coloneqq \{ \ket{\phi} \in (\mathbb{C}^2)^{\otimes n} : P\ket{\phi}=\ket{\phi},\, \forall P \in \mathcal{S} \}.
\end{align}
The corresponding projector onto $\mathcal{S}$ is easily written as
\begin{align}
\Pi_{\mathcal{S}} = \frac{1}{2^r} \prod_{i=1}^r (I + S_i),
\end{align}
Taking the trace on both sides then gives the dimension of the resulting subspace as $\dim(\mathcal{T}(\mathcal{S}))=2^{n-r}$. This also shows that $\mathcal{S}$ Abelian and $-I\notin \mathcal{S}$ are necessary and sufficient conditions in order for $\mathcal{T}$ to be non-trivial.
A state $\phi$ is said to be a stabilizer state whenever it is fully specified by its stabilizer group $\mathcal{S}$, that is $n=r$, hence $\dim(\mathcal{T}(\mathcal{S}))=1$. Otherwise, we say the state is non-stabilizer.
For a single-qubit state $\ket{\psi}$ (in the manuscript, we will adopt this different letter when referring to a single-qubit state), the convex hull of stabilizer states forms an octahedron described by the six vertices
\begin{align}
\mathrm{Stab}_1 \coloneqq \{ \ket{0}, \ket{1}, \ket{+}, \ket{-}, \ket{+i},\ket{-i} \},
\end{align}
where $\ket{\pm}=(\ket{0}\pm\ket{1})/\sqrt{2}$ and $\ket{\pm i}=(\ket{0}\pm i\ket{1})/\sqrt{2}$.
We denote by $\mathcal{C}_n\coloneqq\{U\in\mathcal{U}(2^n): \,U\mathcal{P}_nU^\dagger=\mathcal{P}_n\}$ the $n$-qubit Clifford group, by $H$ the Hadamard gate, by $\mathrm{CNOT}_{i\to j}$ the controlled-NOT gate with control qubit $i$ and target qubit $j$, and by $S$ the phase gate. $H,\mathrm{CNOT}$ and $S \coloneqq \mathrm{diag}(1,i)$ generate the $n$-qubit Clifford group.
We say that two states are Clifford equivalent, indicated with $\simeq_{\mathcal{C}}$, if they are connected by a Clifford unitary. We write $\simeq$ when they are equivalent up to a global phase.
A stabilizer protocol consists of a quantum circuit built from Clifford unitaries, preparation of pure stabilizer auxiliary states, Pauli measurements, classical feed-forward, and discarding of subsystems. Classical randomness may also be used.
We also define the stabilizer nullity as
\begin{align}
\nu(\phi) = n - \log_2 |\mathcal{S}(\phi)| = n - r.
\end{align}
A stabilizer state therefore has $\nu(\phi)=0$. A state with no nontrivial Pauli stabilizer has instead $\nu(\phi)=n$. By a Clifford change of basis, any pure state with $r$ independent Pauli stabilizers, is equivalent to
\begin{align}
\ket{\phi} \simeq_{\mathcal{C}} \ket{0}^{\otimes r}\otimes \ket{\omega},
\label{eq:nullc}
\end{align}
for some $n-r=\nu(\phi)$-qubit state $\ket{\omega}$.
This is due to the fact that any rank-$r$ Stabilizer group can be brought into the normal generators form $\langle Z_1,...,Z_r\rangle$ by an appropriate Clifford unitary. 
In general, under a Clifford unitary $C \in \mathcal{C}_n$, the stabilizer group of an arbitrary state $\ket{\phi}$, when $\ket{\phi'}\coloneqq C\ket{\phi}$, changes as $\mathcal{S}(\phi') = C\mathcal{S}(\phi)C^\dag$.
In the following, it will be useful to recap how does the stabilizer group change after a Pauli measurement.
Suppose a pure state $\phi$ has stabilizer group $\mathcal{S} = \langle S_1,...,S_r\rangle$. Now measure any Hermitian Pauli $P$ with outcome $s=\pm 1$. The post-measurement state is
\begin{align}
\ket{\phi_s} = \frac{\Pi_s \ket{\phi}}{\|\Pi_s \ket{\phi}\|}, \quad \Pi_s = \frac{I+sP}{2}.
\end{align}
Three things can happen. We will refer to these measurements as type (i)-(iii) in the rest of the manuscript.
\begin{enumerate}[label=(\roman*)]
    \item If $\pm P\in \mathcal{S}$, then the state remains unchanged, the outcome is deterministic and the stabilizer group $\mathcal{S}(\phi)$ does not change.
    \item If $PS=-SP$ for at least one $S\in \mathcal{S}$, then, up to replacing generators, we have $\{P,S_1\}=0$, and $[P,S_i]=0$ for $i=2,...,r$. Then, $\langle P\rangle_\phi=0$, and 
    \begin{align}
    \ket{\phi_s} = \frac{(I+sP)}{\sqrt{2}}\ket{\phi}.
    \end{align}
    In particular, this implies that $PS_1$ is anti-Hermitian, and $\ket{\phi_s}=U_s \ket{\phi}$ with $U_s$ Clifford unitary given by
    \begin{align}
    U_s \coloneqq \frac{I+sPS_1}{\sqrt{2}} = e^{\frac{\pi}{4}sPS_1},
    \end{align}
    Since $\exp(\frac{\pi}{4}Q)$ is Clifford if $Q^\dag=-Q$ and $Q \in \mathcal{P}_n$.
    Hence the branch is Clifford-equivalent to $\phi$, i.e. $\phi \simeq_{\mathcal{C}}\phi_s$. Tracking the generators, the only one that changes is $S_1 \to sP$. This implies that the new stabilizer group is $\langle sP,S_2,...,S_{r} \rangle = \mathcal{S}(\phi_s)$, and in particular $\nu(\phi_s)= \nu(\phi)$.
    \item If $P$ commutes with any $S\in \mathcal{S}$, but $\pm P\notin \mathcal{S}$, then all the older stabilizers survive, and we can add at least one more independent one. Hence we have at least $r+1$ stabilizer generators, therefore $\nu(\phi_s)\leq\nu(\phi)-1$.  
\end{enumerate}
We also recap briefly the binary notation for Paulis. For $x,z\in \mathbb{F}_2^{n}$, set $a \coloneqq (x,z)$ and define
\begin{align}
W_a \coloneqq i^{x\cdot z} X^x Z^z,
\label{eq:binary}
\end{align}
so that we identify $(0,0)\leftrightarrow I, (1,0)\leftrightarrow X, (0,1)\leftrightarrow Z, (1,1)\leftrightarrow Y$, and the dot product $x\cdot z$ is intended modulo 2. Equivalently, $x\cdot z$ is the parity of the number of $Y$'s.
The action of $W_a$ on the computational basis is given by
\begin{align}
W_a \ket{y} = i^{x\cdot z} (-1)^{z\cdot y} \ket{y+x},
\end{align}
with addition modulo 2.
The associated symplectic form, if $b=(x',z')$ is
\begin{align}
[a,b] \coloneqq x \cdot z' + z \cdot x' = (x+x')\cdot(z+z') + x \cdot z + x' \cdot z'.
\end{align}
Sometimes we will directly indicate $q(a) \equiv x \cdot z$. Then $[a,b]= q(a+b)+q(a)+q(b)$. $W_a,W_b$ commute iff $[a,b]=0$.

\subsection{Magic state concentration protocols}\label{sec:protdef}
In this section we formally define the finite-copy exact magic state concentration task, as well as its asymptotic extension. Our definitions follow the respective versions in entanglement theory \cite{hayashi2002universaldistortionfreeentanglementconcentration}. We start with the finite-copy task.

\begin{definition}[Exact $\CCZ$ magic-state concentration protocol]
\label{def:exact}
An exact $(k,m)$ $\CCZ$ magic-state concentration protocol is a stabilizer
protocol $\Lambda$ acting on $k$ input copies, with a fixed set of successful
outcomes $\mathcal R_{\mathrm{succ}}$, such that, for every pure single-qubit
state $\psi$ and every $r\in\mathcal R_{\mathrm{succ}}$,
\begin{align}
K_r\ket{\psi}^{\otimes k}=c_r(\psi)\ket{\CCZ}^{\otimes m},
\end{align}
where $K_r$ includes the outcome-dependent Clifford correction associated
with branch $r$, and $c_r(\psi)\in\mathbb C$. The corresponding success
probability is
\begin{align}
\Pr_{\Lambda}\!\left(\psi^{\otimes k}\to\CCZ^{\otimes m}\right)
=\sum_{r\in\mathcal R_{\mathrm{succ}}}|c_r(\psi)|^2.
\end{align}
The protocol is universal if its circuit, measurements, successful outcomes,
and outcome-dependent Clifford corrections are all independent of
$\psi$.
\end{definition}

We can now define the asymptotic version as follows. While we keep the output exact, we will allow for the success probability of achieving a given rate to approach one asymptotically.

\begin{definition}[Universal exact asymptotic $\CCZ$ magic-state concentration]
\label{def:asy}
We say a rate $r(\psi)\geq0$ is universally achievable if there exists a sequence of universal stabilizer protocols $\{\Lambda_k\}_k$ such that,
for every pure single-qubit state $\ket{\psi}$, $\Lambda_k$ outputs a random
number $S_k$ of exact $\CCZ$ states satisfying
\begin{align}
\Pr_{\psi}\!\left[ \frac{S_k}{k}\geq r(\psi)-\varepsilon \right] \xrightarrow[k\to\infty]{} 1, \qquad \forall\,\varepsilon>0.
\end{align}
The optimal universal exact asymptotic $\CCZ$ concentration rate $R(\psi\to\CCZ)$ is the supremum over all universally achievable rates $r(\psi)$.
\end{definition}

\begin{remark}
This definition is stronger than the usual asymptotic notion
\cite{Rubboli_2024,Wilde_2016}, where the target state may be prepared approximately.
Indeed, if $r$ is universally achievable, then for any $R<r(\psi)$ and
$m_k\coloneqq\lfloor Rk\rfloor$,
\begin{align}
p_k(\psi)\coloneqq\Pr_{\psi}[S_k\geq m_k]\longrightarrow1.
\end{align}
Keeping any $m_k$ successful outputs and assigning an arbitrary state
$\omega_k$ otherwise gives
\begin{align}
\sigma_k
=
p_k(\psi)\ketbra{\CCZ}^{\otimes m_k}
+
(1-p_k(\psi))\omega_k,
\end{align}
and therefore
\begin{align}
D\!\left(\sigma_k,\ketbra{\CCZ}^{\otimes m_k}\right)
\leq1-p_k(\psi)\longrightarrow0.
\end{align}
\end{remark}

\subsection{Linearized order-three stabilizer Rényi entropy}
\label{app:linstab}
In the manuscript, we will consider stabilizer protocols acting on multiple copies of a fixed single-qubit state $\ket{\psi} = a\ket{0} + b\ket{1}$, with $a,b\in\mathbb{C}$ and $|a|^2+|b|^2=1$. In the following, we will make use of the simple Clifford-unitary equivalence
\begin{align}
\ketccz = H_1 \mathrm{CNOT}_{1\to3} \mathrm{CNOT}_{1\to2} \ket{\chi}, 
\label{eq:cleq}
\end{align}
where
\begin{align}
\ket{\chi} \coloneqq \frac{1}{2}(\ket{000}+\ket{001}+\ket{010}+\ket{100}),
\end{align}
and
\begin{align}
\ketccz \coloneqq \frac{1}{\sqrt{8}} \sum_{x_1,x_2,x_3\in\{0,1\}} (-1)^{x_1x_2x_3} \ket{x_1x_2x_3}.
\end{align}
For $k$ copies of a qubit, we also define the $k$-copy symmetric subspace as
\begin{align}
\Sym^k(\mathbb{C}^2) \coloneqq \left\{ \ket{\Phi}\in(\mathbb{C}^2)^{\otimes k} : R_\pi\ket{\Phi}=\ket{\Phi}, \ \forall \pi\in S_k \right\},
\end{align}
where $S_k$ is the symmetric group of $k$ elements and $R_\pi$ are the associated permutation actions acting as $R_\pi \ket{x_1,\ldots,x_k} \coloneqq \ket*{x_{\pi^{-1}(1)},\ldots,x_{\pi^{-1}(k)}}$, where $x_i \in \{0,1\}$. 
In particular, $\dim(\Sym^k(\mathbb{C}^2))=k+1$. We also define the Dicke states $\ket*{D_w^k}$, corresponding to the uniform superposition of $k$-bit strings of weight $w$
\begin{align}
\ket*{D_w^k}
\coloneqq
\binom{k}{w}^{-1/2}
\sum_{\substack{x\in\{0,1\}^k\\ |x|=w}}
\ket{x}, \quad w = 0,...,k.
\label{eq:dicke}
\end{align}
These form an orthonormal basis of $\Sym^k(\mathbb{C}^2)$.
We now define some useful measures of magic.
The order-$\alpha$ stabilizer purity of $\phi$ is defined for any $\alpha>1$ as
\begin{align}
P_\alpha(\phi) \coloneqq 2^{-n} \sum_{P\in \widehat{\mathcal{P}}_n} \left|\expval{P}{\phi}\right|^{2\alpha},
\end{align}
while the stabilizer R\'enyi entropy of order $\alpha$ and its linearized version are defined as \cite{LeoneOlivieroHamma2022}
\begin{align}
M_\alpha(\phi) \coloneqq \frac{1}{1-\alpha} \log_2 P_\alpha(\phi), \quad M_\alpha^{\mathrm{lin}}(\phi) \coloneqq 1 - P_\alpha(\phi).
\end{align}
$M_\alpha$ is faithful, Clifford invariant and additive. For integer $\alpha\geq2$, it is also a monotone under stabilizer protocols, while the corresponding linearized quantities satisfy strong monotonicity \cite{LeoneOlivieroHamma2022,LeoneBittel2024}. 
Monotonicity of $M_\alpha$ means $M_\alpha(\phi) \geq M_\alpha(\Lambda(\phi))$ for any stabilizer protocol $\Lambda$, while strong monotonicity of the linearized version implies that if a stabilizer protocol applied to $\ket{\phi}$ produces a pure state $\ket{\phi_r}$ with probability $p_r$, then
\begin{align} 
M_\alpha^{\mathrm{lin}}(\phi) \geq \sum_r p_r M_\alpha^{\mathrm{lin}}(\phi_r).
\end{align}
In this manuscript, we will only deal with the  linearized order-three stabilizer Rényi entropy.
For a pure single-qubit state $\ket{\psi}$, the Bloch vector is defined as $x = \langle X \rangle_\psi$, $y = \langle Y \rangle_\psi$, and $z = \langle Z \rangle_\psi$, with $x,y,z\in\mathbb{R}$ and $x^2+y^2+z^2=1$. We also have the relations $x = 2\mathrm{Re}(a^*b)$, $y = 2\mathrm{Im}(a^*b)$, $z = |a|^2-|b|^2$, giving
\begin{align}
|a^2-b^2|^2 = 1-x^2, \quad |a^2+b^2|^2 = 1-y^2, \quad |2ab|^2 = 1-z^2.
\label{eq:basic1}
\end{align}
The order-three stabilizer purity for a single qubit then reads $\Pthree(\psi)=(1+x^6+y^6+z^6)/2$. We also have 
\begin{align}
\begin{aligned}
\Mlin(\psi) = 6|f_6(a,b)|^2, \quad f_6(a,b) \coloneqq ab(a^4-b^4).
\end{aligned}
\label{eq:f6}
\end{align}
\cref{eq:f6} follows from
\begin{align}
\begin{aligned}
\Mlin(\psi) &= \frac{1}{2}(1-x^6-y^6-z^6) \\
&\overset{\text{(i)}}= \frac{3}{2} (1-x^2)(1-y^2)(1-z^2) \\
&\overset{\text{(ii)}}= \frac{3}{2} |2ab|^2 |a^2+b^2|^2 |a^2-b^2|^2 \\
&= 6 |ab|^2 |a^4-b^4|^2 \\
&= 6|f_6(a,b)|^2,
\end{aligned}
\label{eq:der1}
\end{align}
where in (i) we used the identity
\begin{align}
(u+v+w)^3 - u^3 -v^3 - w^3 = 3(u+v)(v+w)(w+u),
\end{align}
setting $u=x^2,v=y^2,w=z^2$, and in (ii) we used \cref{eq:basic1}.

We now make precise the polynomial argument used in
\cref{thm:six-copy-optimality,thm:eight-copy-optimality} in the main text. The idea is to first decompose any $\ket{\Phi} \in \Sym^{k}(\mathbb{C}^2)$ in the Dicke basis (see \cref{eq:dicke}) as
\begin{align}
\ket{\Phi} = \sum_{w=0}^k c_w \ket*{D_w^k}.
\end{align}
We then consider the $k$-fold tensor product of state $\ket{\psi}$ and express it in the same basis
\begin{align}
\ket{\psi}^{\otimes k} = \sum_{w=0}^k \sqrt{\binom{k}{w}} a^{k-w} b^w \ket*{D_w^k}.
\end{align}
and then associate the following degree-$k$ homogeneous polynomial to $\ket{\Phi}$.
\begin{align}
p_{\Phi}(a,b) \coloneqq \braket{\Phi}{\psi}^{\otimes k} = \sum_{w=0}^k c_w^* \sqrt{\binom{k}{w}} a^{k-w} b^w.
\label{eq:isom}
\end{align}
The equation above then constitutes a linear isomorphism between the space of degree-$k$ homogeneous polynomials and the dual of the symmetric subspace for $k$ qubits.
\begin{align}
\Sym^{k}(\mathbb{C}^2)^* \cong \mathbb{C}[a,b]_k,
\end{align}
where
\begin{align}
\mathbb C[a,b]_k \coloneqq \left\{\sum_{w=0}^{k} d_w\,a^{k-w}b^w : d_w\in\mathbb C\right\}.
\end{align}
We now derive a simple characterization of all homogeneous polynomials vanishing on the six pure qubit stabilizer states. The linearized order-three stabilizer Rényi entropy $\Mlin$ will essentially capture this polynomial constraint.
More precisely, we define the vector space of degree-$k$ homogeneous vanishing polynomials on $\mathrm{Stab}_1$ as
\begin{align}
\mathcal{I}(\mathrm{Stab_1})_k \coloneqq \{ p \in \mathbb C[a,b]_k:\,\, p(\mathrm{Stab_1})=0 \}.
\end{align}
Note that homogenity implies that this definition does not depend on the normalization chosen for $a,b$.
The following lemma is then a direct consequence of the definition.

\begin{lemma}[Polynomials vanishing on the vertices of the stabilizer octahedron]\label{lem:homg}
We have
\begin{align}
\mathcal{I}(\mathrm{Stab_1})_k =
\begin{cases}
\{0\}, \quad &k<6, \\
f_6\,\mathbb C[a,b]_{k-6}, \quad &k\geq 6,
\end{cases}
\end{align}
where $f_6(a,b)=ab(a^4-b^4)$.

\begin{proof}
Let $p(a,b)\in\mathcal{I}(\mathrm{Stab_1})_k$. Since $p(1,0)=p(0,1)=0$, the factors $b$ and $a$, respectively, divide $p$, and therefore $p(a,b)=ab\,r(a,b)$, with $r$ homogeneous of degree $k-2$. The remaining four stabilizer states imply $r(1,\pm1)=r(1,\pm i)=0$. Hence, by the factor theorem, $r(1,z)$ is divisible by
\begin{align}
(z-1)(z+1)(z-i)(z+i)=z^4-1.
\end{align}
Hence $r(1,z) = (z^4-1)s(z)$, with $s$ polynomial of degree at most $k-6$. Now, for $a\neq0$, define $q(a,b)\coloneqq a^{k-6} s(b/a)$. $q$ is homogeneous of degree $k-6$. Now, by homogeneity of $r$
\begin{align}
\begin{aligned}
r(a,b) &= a^{k-2} r\left(1, \frac{b}{a}\right) \\
&= a^{k-2} \left[ \left(\frac{b}{a}\right)^4 - 1\right] s\left( \frac{b}{a} \right) \\
&= \left( b^4 - a^4\right) a^{k-6} s\left( \frac{b}{a} \right) \\
&= \left( b^4 - a^4\right) q(a,b).
\end{aligned}
\end{align}
Thus, for $a\neq0$, up to re-defining $q$ by an overall sign
\begin{align}
p(a,b)=f_6(a,b)q(a,b).
\end{align}
Since they are polynomials, equality extends to $a=0$.
For $k<6$, no nonzero degree-$k$ homogeneous polynomial can contain the degree-six factor $f_6$, hence $\mathcal{I}(\mathrm{Stab_1})_k=\{0\}$. For $k\geq6$, the above proves $\mathcal{I}(\mathrm{Stab_1})_k\subseteq f_6\mathbb C[a,b]_{k-6}$, while the converse inclusion follows immediately from the fact that $f_6$ vanishes on all elements of $\mathrm{Stab_1}$.
\end{proof}
\end{lemma}

\section{Six- and eight-copy achievability}
In this section we formally derive the six- and eight-copy magic state concentration protocols. The main idea behind both protocols is to consider the stabilizer-orthogonal symmetric subspace
\begin{align}
\mathcal{M}_k \coloneqq \left\{ \ket{\Phi} \in \Sym^k(\mathbb{C}^2) : \, \braket{s^{\otimes k}}{\Phi} = 0,\, \forall s \in \mathrm{Stab_1}  \right\}.
\label{eq:mkdef}
\end{align}
Note that $\mathcal{M}_k$ is invariant under tensored Cliffords, namely
\begin{align}
\ket{\Phi} \in \mathcal{M}_k \iff C^{\otimes k}\ket{\Phi} \in \mathcal{M}_k, \quad \forall C \in \mathcal{C}_1.
\label{eq:cliffinv}
\end{align}
Furthermore, note any Kraus operator $K_r$ corresponding to some branch of a stabilizer protocol can only have support on $\mathcal{M}_k$, when restricted to the symmetric subspace, since for a stabilizer state, necessarily $K_r\ket{s}^{\otimes k}=0$.
This implies
\begin{align}
K_r \Pi_{\Sym^k(\mathbb{C}^2)} = K_r \Pi_{\mathcal{M}_k}.
\label{eq:proj}
\end{align}
The main strategy of the $k=6,8$ protocols is to project onto a suitable basis of $\mathcal{M}_k$, and then extract a fixed magic state. All procedures must be done using only stabilizer operations. The following lemma follows via the tools from the previous section, and will be useful to better place the structure of our protocols.

\begin{lemma}[Dimension of $\mathcal{M}_k$]\label{lem:dimmk} We have
\begin{align}
\dim(\mathcal{M}_k) = \begin{cases} 0, \quad &k<6, \\
k-5, \quad &\text{otherwise}.
\end{cases}  
\end{align}
\begin{proof}
Take any $\ket{\Phi}\in \mathcal{M}_k$. In particular, $\ket{\Phi} \in \Sym^k(\mathbb{C}^2)$.
Now, by \cref{eq:mkdef} and the linear isomorphism in  \cref{eq:isom}, $\ket{\Phi}\in$ $\mathcal{M}_k \iff p_{\Phi}(a,b) \in \mathcal{I}(\mathrm{Stab}_1)_k$, but from \cref{lem:homg} we know $\dim(\mathcal{I}(\mathrm{Stab}_1)_k) = \dim(\Sym^2(\mathbb{C}^2)) =k-5$ when $k\geq6$ and $\dim(\mathcal{I}(\mathrm{Stab}_1)_k)=0$ otherwise. Therefore $\dim (\mathcal{M}_k) = k-5$ when $k\geq6$ and $\dim(\mathcal{M}_k)=0$ otherwise. 
\end{proof}
\end{lemma}

\subsection{Six-copy protocol}\label{app:six}
We hereby describe in more detail the six-copy universal magic state concentration protocol sketched in \cref{thm:six-copy-optimality} in the main text. Schematically, the process is a chain of Stabilizer measurements and Clifford unitaries mapping
\begin{align}
\ket{\psi}^{\otimes 6} \longrightarrow \ket{A_6} \longrightarrow \{\ket*{G_4},\ket*{\overline{G}_4}\} \longrightarrow \ket{\chi} \longrightarrow \ket{\CCZ},
\end{align}
where only the first two arrows discard some outcomes, and
\begin{align}
\ket{A_6} \coloneqq \frac{\ket*{D_1^6} - \ket*{D_5^6}}{\sqrt{2}},
\end{align}
with $\ket*{G_4},\ket*{\overline{G}_4}$
to be defined later. Note that $\ket{A_6} \in \mathcal{M}_6$, since the Dicke states belong to the symmetric subspace, and $\ket{A_6}$ is orthogonal to all six single-qubit stabilizer states. Actually, it is the only vector spanning $\mathcal{M}_6$ since by \cref{lem:dimmk}, $\dim(\mathcal{M}_6)=1$.
More formally, we now show how to project onto $\ket{A_6}$ via stabilizer measurements, and extract the respective $\CCZ$ state. The protocol will then be exact and universal in the sense specified in \cref{def:exact}.
\begin{theorem}[Six-copy achievability]\label{thm:sixprob} There exists an explicit six-copy exact $\CCZ$ universal magic state concentration protocol $\Lambda_6$ with success probability
\begin{align}
\Pr_{\Lambda_6}(\psi^{\otimes 6} \to \CCZ) &= \frac{1}{3}\Mlin(\psi).
\end{align}
\begin{proof}
We start noting that $\ket{\psi}^{\otimes 6} \in \Sym^{6}(\mathbb{C}^2)$ and the Dicke states $\ket*{D_0^6},...,\ket*{D_6^6}$ form an orthonormal basis.
We now measure the two commuting observables $X^{\otimes 6},Z^{\otimes 6}$. We have $Z^{\otimes 6}\ket*{D_w^6} = (-1)^w\ket*{D_w^6}$, hence selecting outcome $-1$ restricts to the $3$-dimensional subspace $\mathrm{span}\{\ket*{D_1^6},\ket*{D_3^6},\ket*{D_5^6}\}$.
Instead $X^{\otimes 6}\ket*{D_w^6} = \ket*{D_{6-w}^6}$, therefore $X^{\otimes 6}$ exchanges $w=1$ with $w=5$ and stabilizes $w=3$. Post-selecting again on outcome $-1$, the only possible outcome is the vector $\ket{A_6}$.
We can expand the input state in the Dicke basis as
\begin{align}
\ket{\psi}^{\otimes 6} = \sum_{w=0}^6 \sqrt{\binom{6}{w}} a^{6-w} b^w \ket*{D_w^6}.
\end{align}
Computing the overlap gives
\begin{align}
\braket{A_6}{\psi}^{\otimes 6} = \frac{\sqrt{6}}{\sqrt{2}}(a^5 b - a b^5) = \sqrt{3}f_6(a,b).
\end{align}
Hence in this first measurement step
\begin{align}
\Pr(-1,-1) = 3 |f_6(a,b)|^2 = \frac{1}{2}\Mlin(\psi).
\end{align}
Post-selecting on this outcome, we measure the first two qubits in the computational basis and accept only outcomes $00,11$. Note that the choice of position is irrelevant due to permutation symmetry. This is equivalent to measuring $Z_1,Z_2$ and selecting $(+1,+1),(-1,-1)$. In the first case, only strings from $\ket*{D_1^6}$ can contribute, while in the second case only from $\ket*{D_5^6}$.
The respective post-measurement (non-normalized) branches are
\begin{align}
(\bra{00} \otimes I^{\otimes 4})\ket{A_6} = \frac{1}{\sqrt{12}}(\ket{1000}+\ket{0100}+\ket{0010}+\ket{0001}),
\end{align}
and
\begin{align}
(\bra{11} \otimes I^{\otimes 4})\ket{A_6} = -\frac{1}{\sqrt{12}}(\ket{0111}+\ket{1011}+\ket{1101}+\ket{1110}).
\end{align}
These two occur with total probability $\Pr(00)+\Pr(11)= 1/3 + 1/3=2/3$. The outcomes $01,10$ are discarded. At this step, we call the respective normalized states $\ket*{G_4},\ket*{\overline{G}_4}$. Finally, we measure the first qubit in the $\ket{\pm}$ basis. This corresponds to a $X_3$ measurement in the original ordering. Both outcomes are retained. We get the branches
\begin{align}
\begin{aligned}
(\bra{+}\otimes I^{\otimes 3})\ket*{G_4} &= \frac{1}{2\sqrt{2}}(\ket{000}+\ket{100}+\ket{010}+\ket{001}) = \frac{1}{\sqrt{2}} \ket{\chi} \\
(\bra{-}\otimes I^{\otimes 3})\ket*{G_4} &= \frac{1}{2\sqrt{2}}(-\ket{000}+\ket{100}+\ket{010}+\ket{001})= -\frac{1}{\sqrt{2}}Z^{\otimes 3}\ket{\chi}\\
(\bra{+}\otimes I^{\otimes 3})\ket*{\overline{G}_4} &= \frac{1}{2\sqrt{2}}(\ket{111}+\ket{011}+\ket{101}+\ket{110})= \frac{1}{\sqrt{2}}X^{\otimes 3}\ket{\chi}\\
(\bra{-}\otimes I^{\otimes 3})\ket*{\overline{G}_4} &= \frac{1}{2\sqrt{2}}(\ket{111}-\ket{011}-\ket{101}-\ket{110})= -\frac{1}{\sqrt{2}}Z^{\otimes 3}X^{\otimes 3}\ket{\chi}.
\end{aligned}
\end{align}
Hence the respective output states are (up to global phases)
\begin{align}
\ket{\phi_{00,+}} = \ket{\chi}, \quad \ket{\phi_{00,-}} = Z^{\otimes 3} \ket{\chi}, \quad
\ket{\phi_{11,+}} = X^{\otimes 3} \ket{\chi}, \quad
\ket{\phi_{11,-}} = Y^{\otimes 3} \ket{\chi}.
\end{align}
The protocol therefore concludes with the application of the conditioned Pauli correction among $I^{\otimes 3},X^{\otimes 3},Y^{\otimes 3},Z^{\otimes 3}$, and with the unitary Clifford conversion $\ket{\chi} \to \ket{\CCZ}$ (see \cref{eq:cleq}). The total success probability is
\begin{align}
\Pr_{\Lambda_6}(\psi^{\otimes 6} \to \CCZ) = \frac{1}{3} \Mlin(\psi).
\end{align}
This concludes the proof of achievability.
\end{proof}
\end{theorem}

\begin{remark} The symmetry imposed by the first stabilizer gives directly Pauli equivalence between all the different outcomes of $Z_1,Z_2,X_3$. Indeed, we have
\begin{align}
X^{\otimes 6} \ket{A_6} = - \ket{A_6}, \quad Z^{\otimes 6} \ket{A_6} = - \ket{A_6}.
\label{eq:inv6}
\end{align}
Now label with $t = 0$ the outcome $00$ of $Z_1,Z_2$ and with $t=1$ the outcome $11$. We also label the outcome of $X_3$ by $c\in\{0,1\}$. The measurement branches are, due to \cref{eq:inv6}
\begin{align}
\begin{aligned}
(\bra{t,t}\bra{x_c} \otimes I^{\otimes 3})\ket{A_6} &= (-1)^{c+t} (\bra{t,t}\bra{x_c} \otimes I^{\otimes 3})(X^{\otimes 6})^t (Z^{\otimes 6})^c \ket{A_6} \\
&= (-1)^{c+t} (\bra{0,0}\bra{x_c} \otimes I^{\otimes 3}) ( I^{\otimes 2} \otimes (X^{\otimes 4})^t (Z^{\otimes 4})^c) \ket{A_6} \\
&= (-1)^{c+t} (\bra{0,0}\bra{x_c} \otimes I^{\otimes 3}) ( I^{\otimes 2} \otimes (Z^{\otimes 4})^c(X^{\otimes 4})^t ) \ket{A_6} \\
&= (-1)^{c+t} (\bra{0,0}\bra{x_0} \otimes I^{\otimes 3}) (I^{\otimes 3} \otimes (Z^{\otimes 3})^c)( I^{\otimes 2} \otimes (X^{\otimes 4})^t ) \ket{A_6} \\
&= (-1)^{c+t} (\bra{0,0}\bra{x_0} \otimes (Z^{\otimes 3})^c (X^{\otimes 3})^t ) \ket{A_6},
\end{aligned}
\end{align}
where $\ket{x_0}\coloneqq\ket{+}$ and $\ket{x_1}\coloneqq\ket{-}$. Hence, we could have computed just one branch to prove the achievability, and then conclude by Clifford equivalence. This is the strategy we will pursue when $k=8$.
\end{remark}

\subsection{Eight-copy protocol}\label{app:eight}
We start by defining $q_X \coloneqq a^2 - b^2, q_Y \coloneqq a^2+b^2, q_Z \coloneqq 2ab$.
We first prove the following.
\begin{lemma}[structure of $\mathcal{M}_8$]\label{lem:orth8} The states
\begin{align}
\ket{R_0} \coloneqq \sqrt{\frac{7}{8}} \ket*{D_1^8} - \frac{1}{\sqrt{8}}\ket*{D_5^8},  \quad
\ket{R_1} \coloneqq \frac{\ket{D_2^8}-\ket*{D_6^8}}{\sqrt{2}}, \quad
\ket{R_2}\coloneqq \frac{1}{\sqrt{8}} \ket*{D_3^8} - \sqrt{\frac{7}{8}} \ket*{D_7^8},
\end{align}
form an orthonormal basis of $\mathcal{M}_8$.
Furthermore, the states
\begin{align}
\ket*{E_X} \coloneqq \frac{\ket{R_0}-\ket{R_2}}{\sqrt{2}}, \quad \ket{E_Y} \coloneqq \frac{\ket{R_0}+\ket{R_2}}{\sqrt{2}}, \quad \ket{E_Z} \coloneqq \ket{R_1},
\label{eq:es}
\end{align}
are also an orthonormal basis of $\mathcal{M}_8$. They are also Clifford equivalent, and are eigenstates of the Pauli measurement $\{X^{\otimes 8},Z^{\otimes 8}\}$ corresponding to the following syndrome table, restricted to the subspace $\mathcal{M}_8$
\begin{align}
\begin{array}{c|cc}
& X^{\otimes8} & Z^{\otimes8}\\
\hline
\ket{E_X} & +1 & -1\\
\ket{E_Y} & -1 & -1\\
\ket{E_Z} & -1 & +1
\end{array}.
\end{align}
The fourth syndrome eigenspace $(+1,+1)$ is orthogonal to $\mathcal{M}_8$.
\begin{proof}
By \cref{lem:dimmk}, we need three basis vectors for $\mathcal{M}_8$.
Therefore we have to show only that $R_j \in \mathcal{M}_8$ and $\braket{R_i}{R_j} = \delta_{ij}$. First, consider the qubit stabilizer states of the form
\begin{align}
\ket{s_\xi}  \coloneqq \frac{\ket{0} + \xi \ket{1}}{\sqrt{2}}, \quad \xi \in \{\pm1,\pm i\},
\end{align}
then, the overlap with the Dicke state of weight $w$ reads
\begin{align}
\braket*{s_\xi^{\otimes 8}}{D_w^8} = 2^{-4} \sqrt{\binom{8}{w}} (\xi^*)^w.
\end{align}
Substituting the definitions of $\ket{R_0},\ket{R_1},\ket{R_2}$, each overlap satisfies $\braket*{s_\xi^{\otimes 8}}{R_i} \propto 1 - (\xi^*)^4=0$. They are also orthogonal to the remaining stabilizers $\ket{0}^{\otimes 8},\ket{1}^{\otimes 8}$, since the states contain no vectors with weight $0$ or $8$. Since the Dicke states are already in the the symmetric subspace, this proves $R_j \in \mathcal{M}_8$. Orthogonality follows immediately since their supports are orthogonal. This proves the first part of the theorem.

To prove the second part, just recall that $X^{\otimes 8}\ket*{D_w^8} = \ket*{D_{8-w}^8}$ and $Z^{\otimes 8}\ket*{D_w^8} = (-1)^w\ket*{D_{w}^8}$, therefore $X^{\otimes 8}\ket{R_0} = - \ket{R_2},X^{\otimes 8} \ket{R_1} = - \ket{R_1}, X^{\otimes 8}\ket{R_2} = - \ket{R_0}$ and $Z^{\otimes 8}\ket{R_0}=-\ket{R_0}, Z^{\otimes 8}\ket{R_1}=\ket{R_1}, Z^{\otimes 8}\ket{R_2}=-\ket{R_2}$.
Substituting into \cref{eq:es}, this gives the syndrome table for $\ket{E_X},\ket{E_Y}, \ket{E_Z}$. Any state corresponding to the syndrome $(+1,+1)$ must then be orthogonal to 
$\mathcal{M}_8$. 

We now prove that $\ket{E_X},\ket{E_Y}, \ket{E_Z}$ are Clifford equivalent. We note that each is associated to a different syndrome sector, and they all belong to $\mathcal{M}_8$. In particular, $\ket{E_X}$ and $\ket{E_Z}$ differ by the syndrome mapping $X\to Z$. This can be implemented via $H^{\otimes 8}$ since $H^\dag XH=Z$. Since $\mathcal{M}_8$ is Clifford invariant (see \cref{eq:cliffinv}), and the syndrome identifies uniquely the three one-dimensional subspaces, we must have $H^{\otimes 8}\ket{E_X}\simeq \ket{E_Z}$ (direct calculation actually gives equality). On the other hand, $\ket{E_X}$ and $\ket{E_Y}$ are linked by the syndrome mapping $X \to Y$, which is implemented by $S^{\otimes 8}$ since $S^\dag X S = - Y$, and the sign does not matter since we take eight copies. Hence $S^{\otimes 8}\ket{E_X}\simeq \ket{E_Y}$ (direct calculation gives an imaginary phase $i$). This concludes the proof.
\end{proof}
\end{lemma}
We are now ready to derive the eight-copy stabilizer protocol.
\begin{theorem}[Eight-copy achievability]\label{thm:eightac} There exists an explicit eight-copy universal exact magic state concentration protocol $\Lambda_8$ with $\CCZ$ success probability
\begin{align}
\Pr_{\Lambda_8}(\psi^{\otimes 8} \to \CCZ) &= \frac{2}{3}\Mlin(\psi).
\end{align}
\begin{proof}
Expanding the eight copies of the input state in the Dicke basis we get
\begin{align}
\ket{\psi}^{\otimes 8} = \sum_{w=0}^8 \sqrt{\binom{8}{w}} a^{8-w} b^w \ket*{D_w^8}.
\label{eq:dec8}
\end{align}
The overlaps with the basis vectors of $\mathcal{M}_8$ introduced in \cref{lem:orth8} are
\begin{align}
\braket{R_0}{\psi}^{\otimes 8} = \sqrt{7} a^2 f_6, \quad \braket{R_1}{\psi}^{\otimes 8} = \sqrt{14} ab f_6, \quad \braket{R_2}{\psi}^{\otimes 8} = \sqrt{7} b^2 f_6,
\label{eq:overl}
\end{align}
where we recall that $f_6(a,b)= ab(a^4-b^4)$. Therefore, by \cref{lem:orth8}, since the $R_j$'s form an orthonormal basis of $\mathcal{M}_8$, we can write
\begin{align}
\Pi_{\mathcal{M}_8} \ket{\psi}^{\otimes 8} = \sqrt{7} f_6 \left( a^2 \ket{R_0} + \sqrt{2}ab\ket{R_1} + b^2 \ket{R_2} \right),
\end{align}
therefore
\begin{align}
\left\| \Pi_{\mathcal{M}_8} \ket{\psi}^{\otimes 8} \right\|^2 = 7|f_6|^2(|a|^4 + 2|a|^2|b|^2 + |b|^4) =7 |f_6|^2.
\label{eq:projval}
\end{align}
Now, we consider the eigenstates of $\{X^{\otimes 8},Z^{\otimes 8}\}$ inside the  $\mathcal{M}_8$ subspace as in \cref{lem:orth8}. Using \cref{eq:overl} we get the isotropic overlaps
\begin{align}
\braket{E_X}{\psi}^{\otimes 8} = \sqrt{\frac{7}{2}}f_6 q_X, \quad \braket{E_Y}{\psi}^{\otimes 8} = \sqrt{\frac{7}{2}}f_6 q_Y, \quad \braket{E_Z}{\psi}^{\otimes 8} = \sqrt{\frac{7}{2}} f_6 q_Z.
\label{eq:ovE}
\end{align}
In principle, this projection is not guaranteed to be a stabilizer operation, our protocol will nevertheless effectively project onto the basis states introduced in \cref{lem:orth8} $\{\ket{E_\mu}\}_{\mu=X,Y,Z}$ via an appropriate Pauli measurement.

We now measure $X^{\otimes 8}$ and $Z^{\otimes 8}$. We reject the outcome $(+1,+1)$, since by \cref{lem:orth8} it lies outside $\mathcal{M}_8$, and hence cannot support any universal exact magic-extraction branch. Note that, at this point, the post-measurement state does not lie necessarily within $\mathcal{M}_8$. Assume for now that we received outcome $(+1,-1)$. The resulting branch is
\begin{align}
\Pi_{+1,-1}\ket{\psi}^{\otimes 8}, \quad \Pi_{+1,-1} = \frac{1}{4}(I+X^{\otimes 8})(I-Z^{\otimes 8}).
\end{align}
From \cref{eq:dec8}, it is easy to see that setting $m_w \coloneqq a^{8-w} b^w$, the amplitude of the Dicke state $\ket*{D_w^8}$ in the post-measurement state becomes
\begin{align}
A_w = \begin{cases}
\frac{1}{2}(m_w+m_{8-w}), \quad &w \,\, \text{odd}, \\
0, \quad &w \,\, \text{even},
\end{cases}
\end{align}
so that
\begin{align}
\Pi_{+1,-1}\ket{\psi}^{\otimes 8} = \sum_{w=0}^8 \sqrt{\binom{8}{w}} A_w \ket*{D_w^8}.
\end{align}
Symmetry further gives $A_w = A_{8-w}$. Note that in principle, the resulting state still depends on $a,b$. We will show how to measure away this dependence with Paulis, while retaining some magic content. The idea is to perform a Pauli measurement on five qubits, and collapse onto a state that is Clifford equivalent to $\chi$, hence also to $\CCZ$.
In particular, we measure the first qubit in the $Z$ basis and qubits two-to-five in the $X$ basis, leaving three output qubits. By permutation symmetry, the precise choice of qubits is irrelevant.

For simplicity, we now identify the corresponding outcome with the binary notation. Let $t\in\{0,1\}$ be the
$Z$-measurement outcome, let $r\in\{0,1\}^4$ label the
$X$-measurement outcome, and set $j\coloneqq |r|$. We accept only
$j=1$ or $j=3$. The four-qubit $X$-measurement bra is
\begin{align}
\bra{r_X} = \frac14 \sum_{y\in\{0,1\}^4} (-1)^{r\cdot y}\bra y.
\end{align}
Therefore, the projected $(t,r)$-branch can be written as
\begin{align}
\sum_{k=0}^3 \sqrt{\binom{3}{k}} B_k^{(t,r)} \ket{D_k^3}, \quad B_k^{(t,r)} \coloneqq \frac{1}{4} \sum_{y\in\{0,1\}^4} (-1)^{r\cdot y} A_{t+|y|+k}
\label{eq:branch12}
\end{align}
Assume now $t=0$. Grouping $y$ by weight $s=|y|$ gives
\begin{align}
B_k^{(0,r)} = \frac{1}{4} \sum_{s=0}^4 \sum_{\substack{y\in\{0,1\}^4\\|y|=s}}
(-1)^{r\cdot y} A_{s+k} \equiv \frac{1}{4} \sum_{s=0}^4 K_s^{(4)}(r) A_{s+k},
\end{align}
where we defined the Krawtchouk coefficients \cite{MacWilliamsSloane1977}
\begin{align}
K_s^{(4)}(r) \coloneqq \sum_{\substack{y\in\{0,1\}^4\\|y|=s}}
(-1)^{r\cdot y}.
\end{align}
Since $K_s^{(4)}(r)$ depends only on $j=|r|$, we write it, with a small abuse of notation, as $K_s^{(4)}(j)$. We then have the following identity \cite{MacWilliamsSloane1977}
\begin{align}
\sum_{s=0}^4K_s^{(4)}(j)z^s = (1-z)^j(1+z)^{4-j},
\end{align}
For $j=1$, since $(1-z)(1+z)^3=1+2z-2z^3-z^4$, we have
\begin{align}
B_k^{(0,1)} = \frac{1}{4} \sum_{s=0}^4 K_s^{(4)}(1) A_{s+k} = \frac14
\left( A_k+2A_{k+1}-2A_{k+3}-A_{k+4} \right).
\end{align}
Using $A_{2\ell}=0$, $A_1=A_7$, and $A_3=A_5$, and defining
$\Delta\coloneqq A_1-A_3$ gives
\begin{align}
(B_0^{(0,1)},B_1^{(0,1)},B_2^{(0,1)},B_3^{(0,1)}) =\frac{\Delta}{4}(2,1,0,-1).
\end{align}
This completely specifies the brach for outcome $t=0,j=1$. Importantly, the state dependence is factorized, hence it enters only in the measurement probability, and not in the amplitudes of the state.
Note also that $\Delta = \frac{1}{2}(m_1+m_7-m_3-m_5) = \frac12 \left( a^7b+ab^7-a^5b^3-a^3b^5 \right) = \frac12ab(a^4-b^4)(a^2-b^2) = \frac12f_6q_X$. Substituting back the $B_k^{(0,1)}$'s into \cref{eq:branch12} gives the branch
\begin{align}
t=0,\,j=1 \implies \quad \frac{f_6q_X}{2\sqrt2}\ket{\tau_+}, \qquad \ket{\tau_+}\coloneqq\frac{1}{2\sqrt2}\left(2\ket{000} +\ket{001} +\ket{010} +\ket{100}-\ket{111}\right).
\end{align}
Now $\ket{\tau_+}=H^{\otimes3}\ket{\chi}$, therefore this branch is converted with a Clifford exactly to $\ket{\chi}$, and hence, due to \cref{eq:cleq} to $\ket{\CCZ}$ by another Clifford. Now, we argue that the other branches are also Clifford-equivalent to $\ket{\CCZ}$. In particular, we only considered the joint syndrome $(+1,-1,0,1)$. \cref{eq:ovE} implies that the resulting Kraus operator $L_{+1,-1,0,1}$ satisfies, on the symmetric subspace (since it holds for all $\psi^{\otimes 8}$)
\begin{align}
L_{+1,-1,0,1} \Pi_{\Sym^8(\mathbb{C}^2)} = \frac{1}{2\sqrt{7}} \ketbra{\tau_+}{E_X}.
\end{align}
This calculation shows that the subsequent Pauli measurements effectively annihilate the effect on any state orthogonal to $\ket{E_X}$, at least on the symmetric subspace.
Let's now keep $t,j$ fixed. We can change between the $(+1,-1),(-1,-1),(-1,+1)$ syndromes by applying a Clifford correction to put back onto $(+1,-1)$.
What changes however are the effective coefficients $a,b$ of the input state, which are also rotated by the same Clifford. 
More precisely, if the branch has zero probability, we can ignore it. If the branch has nonzero probability, then it must be supported within $\mathcal{M}_8$. Now, by \cref{lem:orth8}, we can change the $X^{\otimes 8},Z^{\otimes 8}$ syndrome of the input state with a transversal Clifford $C^{\otimes 8}$ to put it back on $(+1,-1)$. By \cref{lem:orth8}, there is only a one-dimensional subspace within $\mathcal{M}_8$ with this syndrome, hence we can restrict to the Kraus branch $L_{+1,-1,0,1}$.
Using the Clifford equivalence among $\ket{E_X},\ket{E_Y},\ket{E_Z}$ proved in \cref{lem:orth8}, this effectively gives the other two effects
\begin{align}
L_{-1,-1,0,1} \Pi_{\Sym^8(\mathbb{C}^2)} \simeq \frac{1}{2\sqrt{7}} \ketbra{\tau_+}{E_Y}, \quad L_{-1,+1,0,1} \Pi_{\Sym^8(\mathbb{C}^2)} \simeq \frac{1}{2\sqrt{7}} \ketbra{\tau_+}{E_Z}.
\end{align}
This clarifies why, for the sake of the derivation, we can assume that the initial syndrome is $(+1,-1)$. 

Let us now analyze the other values of $t,j$. We will repeat a similar reasoning as the one discussed in the remark below \cref{thm:sixprob}.
We want to show that the other cases are Clifford equivalent. 
Call the post-$(+1,-1)$ syndrome state $\ket{\theta}$. This satisfies $X^{\otimes 8}\ket{\theta} = \ket{\theta}$ and $Z^{\otimes 8}\ket{\theta} = -\ket{\theta}$. Now, measuring $t=1$ instead of $t=0$ corresponds to starting from $X_1 \ket{\theta}$. Due to the $X^{\otimes 8}$ symmetry, this is the same as applying $X_2...X_8$, but these just give a global phase $(-1)^j$ to the four-qubit $X$ measurement and the resulting state differs by $(-1)^j X^{\otimes 3}$. Consider now instead $j=3$. Using the same logic, $Z_2Z_3Z_4Z_5\ket{\theta}=-Z_1Z_6Z_7Z_8\ket{\theta}$, so the resulting state differs from $j=1$ only by $-(-1)^t Z^{\otimes 3}$. Putting all together, identifying the first syndrome with $\mu=X,Y,Z$ as in \cref{lem:orth8}, we derived, up to global phases
\begin{align}
L_{\mu,t,j} \Pi_{\Sym^8(\mathbb{C}^2)} \simeq \frac{1}{2\sqrt{7}}(X^{\otimes 3})^t (Z^{\otimes 3})^{(j-1)/2} \ketbra{\tau_+}{E_\mu}, \quad t\in \{0,1\} ,\, j \in \{1,3\}
\end{align}
or more suggestively
\begin{align}
L_{\mu,t,j} \Pi_{\Sym^8(\mathbb{C}^2)} = \frac{1}{2\sqrt{7}} C_{t,j} \ketbra{\CCZ}{E_\mu}, \quad C_{t,j} \in \mathcal{C}_3.
\end{align}
Therefore, the complete measurement effectively implements the projector onto $\mathcal{M}_8$:
\begin{align}
\Pi_{\Sym^8(\mathbb{C}^2)} \left(\sum_{\mu=X,Y,Z} \sum_{j=1,3} \sum_{t={0,1}} L_{\mu,t,j}^\dag L_{\mu,t,j} \right) \Pi_{\Sym^8(\mathbb{C}^2)}&= \frac{2 \cdot (4+4)}{28} \sum_{\mu=X,Y,Z} \ketbra{E_\mu} \\
&= \frac{4}{7} \Pi_{\mathcal{M}_8},
\end{align}
where we used that there are $2\bigl(\binom41+\binom43\bigr)=16$ total accepted values of $t,j$. Then, \cref{eq:projval} gives
\begin{align}
\Pr_{\Lambda_8}(\psi^{\otimes 8} \to \CCZ) = \frac{4}{7} \left\| \Pi_{\mathcal{M}_8} \ket{\psi}^{\otimes 8} \right\|^2 = \frac{2}{3} \Mlin(\psi).
\end{align}
To conclude, we also note that the outcomes $j=0,4$ are discarded since the resulting state still depends on the amplitude $A_k$ individually, hence it is not independent of the input. Outcome $j=2$ corresponds instead to a stabilizer state.
This concludes the proof.
\end{proof}
\end{theorem}

\section{Six- and eight-copy upper bounds}
In this section we derive the tight upper bound on the $k=6$ success probability, and other representation-theoretic upper bounds when $6\leq k\leq9$. While we are only able to get tight achievability and optimality when $k=6$, the $k>6$ upper bounds show that the scaling in $\Mlin$ is effectively optimal up to $k=9$. When $k\geq 10$, other Clifford-invariant quantities can appear in the upper bound.

\subsection{Optimality when \texorpdfstring{$k=6$}{k=6}} 
In this section we show that the protocol we derived for $k=6$ achieves in fact optimal success probability among all $\CCZ$ exact magic state concentration protocols acting on $6$ input copies (see \cref{def:exact}).
We first prove a few preliminary lemmas. The following allows for a simplification for the evaluation of the properties of the state $\ket{A_6}$, since the respective Clifford-invariant measures can then be reduced to a smaller non-stabilizer state.

\begin{lemma}[Intermediate state $\eta$]\label{lem:intstate} We have $\ket{A_6} \simeq_{\mathrm{Cl}} \ket{\eta}\ket{-}\ket{1}$, where
\begin{align}
\ket{\eta} = \frac{\ket{0000}+\sum_{j=1}^4\ket{e_j}+\ket{1111}}{\sqrt{6}},
\end{align}
and $e_j \in \{0,1\}^4$ and has $1$ in position $j$ and it is zero otherwise.
\begin{proof}
Recall that
\begin{align}
\ket{A_6} = \frac{\ket*{D_1^6} - \ket*{D_5^6}}{\sqrt{2}} = \frac{1}{\sqrt{12}} \sum_{j=1}^6(\ket{e_j} - \ket{\mathbf{1}+e_j}).
\end{align}
Consider the linear change of variables $(x_1,...,x_6)\to(\ell_1,\ell_2,\ell_3,\ell_4,t,p)$ where
\begin{align}
\begin{aligned}
\ell_j &= x_{j+1} + x_6, \quad j=1,...,4, \\
t&=x_6, \\
p &= \sum_{j=1}^6 x_j.
\end{aligned}
\end{align}
This is implemetable via $\mathrm{CNOT}$ and $\mathrm{SWAP}$, hence it is a Clifford unitary. Now, the last coordinate $p=1$ is the same on both $e_j$ and $\mathbf{1}+e_j$ and gives the $\ket{1}$ on the last qubit. $5$ strings $e_j$ have $x_6 =0$, and one has $x_6=1$. On the first, the map simply implements a shift and gives $\sum_{j=1}^4\ket{e_j}$ and $\ket{0000}$, on the last, it gives $\ket{1111}$. Hence
\begin{align}
\sum_{j=1}^6\ket{e_j} \longrightarrow \left[\left( \ket{0000}+\sum_{j=1}^4\ket{e_j}\right)\ket{0} + \ket{1111}\ket{1}\right]\ket{1}.
\end{align}
The outcome is reversed for $\mathbf{1}+e_j$, giving
\begin{align}
\sum_{j=1}^6\ket{\mathbf{1}+e_j} \longrightarrow \left[\left( \ket{0000}+\sum_{j=1}^4\ket{e_j}\right)\ket{1} + \ket{1111}\ket{0}\right]\ket{1}.
\end{align}
Putting the two together
\begin{align}
\ket{A_6} \longrightarrow \frac{\ket{0000}+\sum_{j=1}^4\ket{e_j}-\ket{1111}}{\sqrt{6}} \ket{-}\ket{1}.
\end{align}
Applying a final controlled-$Z$ gate on any two of the first four qubits concludes the proof.
\end{proof}
\end{lemma}
The state $\eta$ has the following properties when we compute the expectation values of Paulis. Using the binary notation (see \cref{eq:binary}) we get
\begin{lemma}[Properties of $\eta$]\label{lem:propeta} For any $a\neq0$:
\begin{enumerate}[label=(\roman*)]
    \item $\langle W_a \rangle_\eta \in \left\{ 0, \pm \frac{1}{3} \right\}$,
    \item $\langle W_a \rangle_\eta \neq 0 \iff q(a) = 0$,
    \item $\nu(\eta)=4$, $P_2(\eta) =1/6$.
\end{enumerate}
\begin{proof}
Set $E \coloneqq \{0,e_1,e_2,e_3,e_4,\mathbf{1}\} \subseteq \mathbb{F}_2^4$, then $\ket{\eta} = \frac{1}{\sqrt{6}}\sum_{y\in E}\ket{y}$. Then for $x,z \in \mathbb{F}_2^4$
\begin{align}
\begin{aligned}
\langle W_a \rangle_\eta &= \frac{i^{x\cdot z}}{6} \sum_{y,y'\in E} (-1)^{z\cdot y} \braket{y}{y'+x} \\
&=\frac{i^{x\cdot z}}{6} \sum_{y \in E, y+x \in E} (-1)^{z\cdot y}.
\end{aligned}
\end{align}
Now if $y\in E$, then $y+x \in E$ iff $\exists y'\in E$ such that $x = y+y'$. But note that any vector in $\mathbb{F}_2^4$ can be written as the sum of two elements of $E$, and such pair decomposition is unique. Indeed summing the pairs gives exactly $16$ distinct elements. Now, if $x\neq 0$ this implies that the sum above runs over two elements, differing by $x$, hence, for some $y\in E$
\begin{align}
\langle W_a \rangle_\eta = \frac{i^{x\cdot z}}{6} ( (-1)^{z\cdot y}+(-1)^{z\cdot (y+x)}) = \begin{cases}
0, \quad &x\cdot z = 1, \\
\pm \frac{1}{3}, &x\cdot z = 0.
\end{cases}
\label{eq:w1}
\end{align}
If instead $x=0$, since we assumed $a\neq0$, we must have $z\neq 0$. Then, setting $t=|z| \in [4]$
\begin{align}
\begin{aligned}
\langle W_a \rangle_\eta &= \frac{1}{6} \sum_{y \in E} (-1)^{z\cdot y} \\
&= \frac{1+\sum_{j=1}^4 (-1)^{z_j} +(-1)^{t}}{6} \\
&= \frac{5- 2t +(-1)^{t}}{6} \in \left\{ \pm \frac{1}{3} \right\},
\end{aligned}
\label{eq:w2}
\end{align}
where we used that $\sum_{j=1}^4 (-1)^{z_j} = 4-2t$. \cref{eq:w1,eq:w2} prove (i) and (ii).
In particular, this implies that no nonidentity Pauli stabilizes $\eta$, since that would require $|\langle W_a \rangle_\eta|=1$, hence $\nu(\eta) =4$. To compute the purity $P_2$, we need to count how many Paulis have expectation value of magnitude $1/3$. Condition (ii) implies that it suffices to count how many $q(a)= x\cdot z =0$, with $x,z \in \mathbb{F}_2^4$. Since this is a single linear constraint, the solution space has dimension $3$, hence it contains $8$ vectors, therefore counting both $x,z$ we get $2^4 + (2^4-1)2^3=136$. Hence the magnitudes are $1 \times 1, 135 \times 1/3$ and the rest zero. Hence $P_2(\eta) = 2^{-4}\left( 1+\frac{135}{3^4} \right) = 1/6$. This proves also (iii).
\end{proof}
\end{lemma}

The state $\chi$, which is Clifford-equivalent to $\CCZ$ (recall \cref{eq:cleq}), has instead the following properties.

\begin{lemma}[Properties of $\chi$]\label{lem:propchi} We have:
\begin{align}
\nu(\chi) = 3, \quad P_2(\chi) = \frac{11}{32},
\end{align}
furthermore, all Paulis expectation values of $\chi$ satisfy $\langle W_a \rangle_\chi \in \{0,\pm \frac{1}{2},1\}$.
\begin{proof}
Write $\ket{\chi} = \frac{1}{2} \sum_{x\in T} \ket{x}$, where $T = \{0,e_1,e_2,e_3\} \subset \mathbb{F}_2^3$. Following the same reasoning as in \cref{lem:propeta} we get, for $x,z\in \mathbb{F}_2^3$ 
\begin{align}
\langle W_a \rangle_\chi = \frac{i^{x\cdot z}}{4} \sum_{y \in T, y+x \in T} (-1)^{z\cdot y} .
\end{align}
For $x\neq0$, the six pairwise nonzero differences are $e_1, e_2,e_3, e_1+e_2,e_1+e_3,e_2+e_3$. For each, we have a unique pair $x,x+y$, and the outcome is nonzero exactly when $x\cdot z = 0$, since there are $4$ such $z$, this gives other $6\cdot 4=24$ nonzero Pauli expectations with value $\pm \frac{1}{2}$, and $32$ with expectation zero.
If instead $x=0$ and $z\neq 0$, setting $t\equiv |z|$, proceeding as in \cref{lem:propeta} we get
\begin{align}
\begin{aligned}
\langle Z^z \rangle_\chi &= \frac{1+ \sum_{j=1}^3 (-1)^{z_j}}{4} \\
&= \frac{4-2t}{4} \\
&= 1 - \frac{t}{2} \in \left\{0,\pm\frac{1}{2} \right\},
\end{aligned}
\end{align}
thus no nonidentity Pauli stabilizes $\chi$, and $\nu(\chi)=3$. In particular, we have $4$ strings giving $\langle Z^z \rangle_\chi \in \{\pm \frac{1}{2}\}$ and $3$ giving zero. Hence the spectrum has $1$ once, $\pm \frac{1}{2}$, $28$ times, and $0$, $35$ times, therefore
\begin{align}
\sum_{a\in \mathbb{F}_2^6} |\langle W_a \rangle_\chi|^4 = 1 + 28 \left(\frac{1}{2}\right)^4 = \frac{11}{4},
\end{align}
which gives $P_2(\chi) = \frac{11}{32}$.
\end{proof}
\end{lemma}

The previous lemmas allow us to show that the optimal success probability of exact $\CCZ$ conversion when starting from the fixed state $A_6$ is upper bounded as follows.

\begin{lemma}[Upper bound on $A_6$] \label{lem:fixedup}The stabilizer conversion probability from $A_6$ to $\chi$ satisfies
\begin{align}
\Pr_{\mathrm{opt}}(A_6 \to \chi) \leq \frac{2}{3}.
\end{align}
\begin{proof}
By hypothesis of exact concentration towards $\chi$, we may assume that, up to a Clifford unitary, the final branch has the form $\ket{\chi}\ket{\zeta}$, for some pure state $\ket{\zeta}$, where discarding has been postponed to the end of the protocol. Purity of $\ket{\chi}$ guarantees this product form. Since stabilizer auxiliary qubits and Clifford unitaries do not change either the stabilizer nullity or the stabilizer purity, the only nontrivial changes along the branch can arise from Pauli measurements.

First, we show that in any successful branch there must be exactly one type-(iii) Pauli measurement, as defined in \cref{sec:stabform}. 

Assume by contradiction that there are only type-(i) and (ii) measurements, then on the final branch, under these restricted measurements, we must have the Clifford equivalence $\ket{A_6} \simeq_{\mathcal{C}} \ket{\chi}\ket{\zeta}$.
Due to \cref{lem:propeta,lem:intstate}, $\nu(A_6)=\nu(\eta)= 4$, and due to \cref{lem:propchi}, $\nu(\chi)=3$. Then, we would have $\nu(A_6) = \nu(\chi)+\nu(\zeta)$, hence $\nu(\zeta)=1$. Also, from \cref{lem:propeta,lem:propchi}, we must have $P_2(\eta)=P_2(A_6)=1/6$, $P_2(\chi)=11/32$. Hence multiplicativity of $P_2$ would give
\begin{align}
P_2(\zeta)= \frac{P_2(A_6)}{P_2(\chi)} =16/33.
\label{eq:pur}
\end{align}
But $\nu(\zeta)=1$ implies that $\ket{\zeta}$ is Clifford equivalent to a pure qubit tensored with a stabilizer state, and every pure qubit satisfies $P_2 \geq 2/3$, so, since the purity is one on stabilizer states, also $P_2(\zeta)\geq2/3$, which contradicts \cref{eq:pur}. Hence, we cannot have $\ket{A_6} \simeq_{\mathcal{C}} \ket{\chi}\ket{\zeta}$ and there must be at least one type-(iii) Pauli measurement.
Furthermore, there can be only one type-(iii) Pauli measurement, because $\nu(\chi)=3$, and each such measurement decreases the stabilizer nullity by at least one (see \cref{sec:stabform}). The rest of the proof is therefore about bounding the success probability of the type-(iii) measurement.

Now, by the same reasoning as above, before the type-(iii) measurement, the state must be Clifford equivalent to $\ket{\eta}\ket{s}$ for some stabilizer state $\ket{s}$. Since Cliffords map Paulis into Paulis, we can assume without loss of generality to have exactly this state before the type-(iii) measurement. Since the measurement is type-(iii), it commutes with all the stabilizers, thus it acts on $\ket{s}$ as $\pm1$. Thus, up to a global sign that swaps measurement outcomes, we can reduce the measurement directly to $\ket{\eta}$. This will not be relevant in the following, since we will care only about magnitudes of the Pauli expectation.

Let the measured Pauli be $P$ with outcome $s = \pm 1$. The corresponding projector is $\Pi_s = (I+sP)/2$, and the probability of outcome $s$ is $p_s = \bra{\eta}\Pi_s\ket{\eta} = (1 + s \langle P \rangle_\eta)/2$. For a Pauli $Q$ commuting with $P$, its expectation in the post-measurement state is
\begin{align}
\langle Q \rangle_s = \frac{\bra{\eta}\Pi_sQ\Pi_s\ket{\eta}}{\bra{\eta}\Pi_s\ket{\eta}} = \frac{\langle Q \rangle_\eta+ s\langle QP \rangle_\eta}{1 + s \langle P \rangle_\eta}.
\end{align}
If instead $\{Q,P\}=0$, then $\Pi_sQ\Pi_s=0$, so $\langle Q\rangle_s=0$.
We now have two cases:
\begin{itemize}
    \item $\langle P\rangle_\eta=0$: by property (ii) in \cref{lem:propeta}, we have $q(p)=1$, and both outcomes occur with probability $p_s=1/2$. For every $Q=W_u$ commuting with $P$ and $Q\neq I,P$ (so that the hypothesis in \cref{lem:propeta} holds), the relation $[u,p]=0$ gives $q(u+p)=q(u)+1$, so exactly one of $q(u)$ and $q(u+p)$ vanishes. Hence, by properties (i)–(ii) in \cref{lem:propeta}, exactly one of $\langle Q\rangle_\eta$ and $\langle QP\rangle_\eta$ is nonzero, with magnitude $1/3$, and therefore $|\langle Q\rangle_s|=1/3$. $Q=I,P$ give instead magnitude $1$.
    Thus every $Q$ Pauli expectation on the post-measurement state has magnitude in ${0,1/3,1}$, and in particular none has magnitude $1/2$. Since any state Clifford equivalent to $\ket{\chi}$ tensored with a stabilizer state has Pauli expectations of magnitude $1/2$ by \cref{lem:propchi}, neither outcome can correspond to a successful branch.
    \item $\langle P \rangle_\eta \neq 0$: by \cref{lem:propeta}, we can only have $\langle P \rangle_\eta = \varepsilon/3$ with $\varepsilon \in \{\pm 1\}$, therefore the two outcome probabilities are $p_{s=\varepsilon} = 2/3$ and $p_{s=-\varepsilon} = 1/3$. We only need to show that the outcome $s=-\varepsilon$ cannot succeed. 
    For every $Q = W_u$ and $Q\neq I,P$ commuting with $P$, by property (ii) in \cref{lem:propeta}, $q(u+p) = q(u)$, thus $\langle Q \rangle_\eta = 0 \iff \langle QP \rangle_\eta = 0$, and if they are nonzero, they both have magnitude $1/3$. For the outcome $-\varepsilon$ 
    \begin{align}
    \langle Q \rangle_{-\varepsilon} = \frac{\langle Q \rangle_\eta - \varepsilon \langle QP \rangle_\eta}{1 - \varepsilon \langle P \rangle_\eta} = \frac{\langle Q \rangle_\eta - \varepsilon \langle QP \rangle_\eta}{2/3} \in \left\{ 0,\pm 1 \right\}
    \end{align}
    If instead $Q=I,P$, then $|\langle Q \rangle_{-\varepsilon}|=1$, so the previous equation extends to all $Q$. Now, for every pure qubit state, computing the purity we must have $\sum_{Q\in \widehat{\mathcal{P}}_4} |\langle Q \rangle|^2=16$.
    Since in our case each term is either $\{0,1\}$, there must be at least $16$ Paulis with $|\langle Q \rangle|^2=1$. But then the stabilizer group of the output would have at least $4$ generators, hence $\nu =0$, so it is a stabilizer state and cannot be successful. The only possible outcome is then $\varepsilon$ with probability $2/3$.
\end{itemize}
This concludes the proof, since we showed that there is only one possible valid branch and that branch has probability $2/3$.
\end{proof}
\end{lemma}
Now, we reduce the general six-copy protocol to ones that effectively concentrate starting from $A_6$. Then, using the previous lemma we conclude.
\begin{theorem}[Optimality up to $k=6$]\label{thm:opt6} Among all universal exact $\CCZ$ magic-state concentration protocols acting on six copies of a pure qubit state, the optimal success probability must satisfy
\begin{align}
\Pr_{\mathrm{opt}}(\psi^{\otimes 6} \to \CCZ) \leq \frac{1}{3} \Mlin(\psi),
\end{align}
furthermore, the probability of extracting at least two $\CCZ$ when $k=6$, or extracting at least one $\CCZ$ when $k<6$ are both zero.
\begin{proof}
Consider an universal stabilizer protocol acting on $\psi^{\otimes 6}$ and producing an exact $\CCZ$ state with some nonzero success probability. Since $\ket{\CCZ} \simeq_{\mathcal{C}} \ket{\chi}$, it is enough to consider $\ket{\chi}$ as output. Assume the valid $\ket{\chi}$ branches are indexed by $r$, then, on outcome $r$, we must have, for some Kraus operator $K_r$
\begin{align}
K_r \ket{\psi}^{\otimes 6} = c_r(\psi)\ket{\chi}, \quad \forall \psi.
\end{align}
In particular, we have
\begin{align}
c_r(\psi) = \bra{\chi}K_r \ket{\psi}^{\otimes 6}, \quad \forall \psi.
\end{align}
Since the $\psi^{\otimes 6}$'s span the symmetric subspace, it is equivalent to consider instead the restricted Kraus operator $K_r|_{\Sym^k(\mathbb{C}^2)}$, since both equations above do not change. Now $c_r(\psi)$ implements exactly the isomorphism we introduced in \cref{eq:isom}. Furthermore $c_r(s)=0$ on all stabilizer states $s \in \mathrm{Stab}_1$. Therefore, with a slight abuse of notation, we have $c_r(a,b) \in \mathcal{I}(\mathrm{Stab}_1)_6$. Now, by \cref{lem:homg},      $\mathcal{I}(\mathrm{Stab}_1)_6 = f_6\mathbb{C}$. 
Hence $c_r(a,b)= \gamma_r f_6(a,b)$, for some constant $\gamma_r$. Therefore, on any stabilizer branch
\begin{align}
K_r \ket{\psi}^{\otimes 6} = \gamma_r f_6(a,b) \ket{\chi}.
\end{align}
When $k<6$ instead, due to \cref{lem:homg}, no non-trivial vanishing polynomial exists, hence there is no successful branch.
Now, using $\braket{A_6}{\psi}^{\otimes 6} = \sqrt{3}f_6(a,b)$, we get
\begin{align}
K_r \ket{\psi}^{\otimes 6} = \frac{\gamma_r}{\sqrt{3}} \braket{A_6}{\psi}^{\otimes 6} \ket{\chi},
\end{align}
hence, since it must hold for all $\psi$
\begin{align}
K_r\big|_{\Sym^6(\mathbb{C}^2)} = \frac{\gamma_r}{\sqrt{3}}  \ketbra{\chi}{A_6}.
\end{align}
Now, the protocol is fixed independently of the input state. So if we pick as input $\ket{A_6}$ (note that this operation is well defined, since the protocol must work on any $\psi^{\otimes 6}$, and therefore by linearity also on $A_6$), we get the following equivalent characterization of the $\gamma_r$'s for any stabilizer protocol $\Lambda$
\begin{align}
\Pr_{\Lambda}(A_6 \to \chi) = \sum_r \frac{|\gamma|^2_r}{3},
\end{align}
hence 
\begin{align}
\begin{aligned}
\Pr_{\Lambda}(\psi^{\otimes 6} \to \chi) &= |f_6(a,b)|^2 \sum_r |\gamma_r|^2 \\
&= 3 |f_6(a,b)|^2 \Pr_{\Lambda}(A_6 \to \chi) \\
&= \frac{1}{2} \Mlin(\psi) \Pr_{\Lambda}(A_6 \to \chi) \\
&\leq \frac{1}{3} \Mlin(\psi).
\end{aligned}
\end{align}
Where in the last line we used \cref{lem:fixedup}. Finally, stabilizer nullity (see \cref{lem:intstate,lem:propeta}) gives $4=\nu(A_6)\geq3m$ for any branch containing $m$ exact $\CCZ$ states, hence $m\leq1$.
This concludes the proof.
\end{proof}
\end{theorem}

\subsection{Optimal scaling in \texorpdfstring{$\Mlin$}{Mlin} up to \texorpdfstring{$k=9$}{k=9}}
In this section, we show that the scaling in $\Mlin$ is effectively optimal up to $k=9$ pure qubit-state input copies. Furthermore, it cannot be fundamentally improved for any target non-stabilizer state, thus not necessarily only for $\CCZ$.

\begin{proposition}[Optimal scaling when $6\leq k\leq 9$] \label{prop:opt6to9} Among all universal stabilizer protocols converting $\ket{\psi}^{\otimes k}$ exactly into any fixed non-stabilizer state $\ket{\tau}$, the optimal success probability satisfies
\begin{align}
\Pr_{\mathrm{opt}}(\psi^{\otimes k} \to \tau) \leq \frac{7(k-5)}{2(k+1)} \Mlin(\psi), \quad 6\leq k\leq9.
\end{align}
\begin{proof}
Recall that for any fixed stabilizer protocol, on any of its branches satisfies, in the Kraus representation (see \cref{eq:proj})
\begin{align}
K_r \Pi_{\Sym^k(\mathbb{C}^2)} = K_r \Pi_{\mathcal{M}_k}.
\end{align}
Therefore, for any stabilizer protocol $\Lambda$
\begin{align}
\begin{aligned}
\Pr_{\Lambda}(\psi^{\otimes k} \to \tau) &= \sum_r \left\| K_r \ket{\psi}^{\otimes k} \right\|^2 \\
&= \sum_r \left\| K_r \Pi_{\Sym^k(\mathbb{C}^2)} \ket{\psi}^{\otimes k} \right\|^2 \\
&= \sum_r \left\| K_r \Pi_{\mathcal{M}_k} \ket{\psi}^{\otimes k} \right\|^2 \\
&\leq \left\| \Pi_{\mathcal{M}_k} \ket{\psi}^{\otimes k} \right\|^2,
\label{eq:bound44}
\end{aligned}
\end{align}
where in the last line we used that $\sum_r K_r^\dag K_r \leq I$ when restricting to successful outcomes. In the rest of the proof we will therefore bound the last line. 
Now, define the Hermitian operator
\begin{align}
\Tilde{\Omega}_6 \coloneqq \frac{1}{4}(I - X^{\otimes 6}- Y^{\otimes 6}- Z^{\otimes 6}),
\end{align}
then
\begin{align}
\bra{\psi}^{\otimes 6} \Tilde{\Omega}_6  \ket{\psi}^{\otimes 6} = \frac{1}{2} \Mlin(\psi).
\label{eq:a6mlin}
\end{align}
Let us consider the operator
\begin{align}
A_k \coloneqq \Pi_{\Sym^k(\mathbb{C}^2)} \left( \Tilde{\Omega}_6 \otimes I^{\otimes(k-6)} \right) \Pi_{\Sym^k(\mathbb{C}^2)}.
\label{eq:ak}
\end{align}
Clearly, also $\bra{\psi}^{\otimes k} A_k  \ket{\psi}^{\otimes k} = \frac{1}{2} \Mlin(\psi)$. Now, for every $\ket{s}\in \mathrm{Stab}_1$ we have $\Omega_6 \ket{s}^{\otimes 6}=0$, hence also $A_6 \ket{s}^{\otimes 6}=0$, therefore
\begin{align}
A_k = \Pi_{\mathcal{M}_k} A_k \Pi_{\mathcal{M}_k}.
\end{align}
Now, due to \cref{eq:isom,lem:homg} (see also the discussion in \cref{lem:dimmk}), we have an isomorphism between different representations of the Clifford group
\begin{align}
\mathcal{M}_k \cong_{\mathcal{C}} \chi_6 \otimes \Sym^{k-6}(\mathbb{C}^2)^*,
\label{eq:cliffrepr}
\end{align}
with $\chi_6$ a scalar to be defined in the following derivation.
We already know that it is a linear isomorphism, to prove that it is a representation isomorphism, define
\begin{align}
B_k: \Sym^{k}(\mathbb{C}^2) \longrightarrow \mathbb{C}[a,b]_k, \quad \ket{\Phi} \mapsto p_{\Phi}(a,b).
\end{align}
Then, setting
\begin{align}
T_k \coloneqq B_k^{-1} \circ V_{f_6} \circ B_{k-6}, \quad V_{f_6}(q) \coloneqq f_6 q.
\end{align}
$T_k: \Sym^{k-6}(\mathbb{C}^2) \longrightarrow \mathcal{M}_k$ is a linear bijection, and it satisfies the intertwining relation
\begin{align}
C^{\otimes k} T_k = \chi_6(C) T_k C^{\otimes(k-6)},
\end{align}
where $\chi_6(C)$ is the scalar action $C f_6 = \chi_6(C) f_6$ on the one-dimensional irreducible $f_6$.
This proves \cref{eq:cliffrepr}.

Therefore, $\mathcal{M}_k$ is irreducible exactly when $\Sym^{k-6}(\mathbb{C}^2)$ is. Now set $m \coloneqq k-6$. We then consider $m=0,1,2,3$. Since the Clifford group is a 3-design \cite{Zhu_2017}, its commutant on $\Sym^{m}(\mathbb{C}^2)$ is equal to the corresponding $U(2)$ one, but $\Sym^{m}(\mathbb{C}^2)$ is an irreducible $U(2)$ representation, so the commutant contains only the identity. Now, in particular, the operator $A_k$ defined in \cref{eq:ak} belongs to the Clifford commutant, and has support only on $\mathcal{M}_k$. Therefore we must have, when $6 \leq k\leq 9$
\begin{align}
A_k = \lambda_k \Pi_{\mathcal{M}_k}, \quad \lambda_k \geq 0,
\label{eq:schur}
\end{align}
hence, due to \cref{eq:a6mlin}
\begin{align}
\frac{1}{2} \Mlin(\psi) = \bra{\psi}^{\otimes k} A_k  \ket{\psi}^{\otimes k} .
\label{eq:eq23}
\end{align}
Computing the Haar average on the LHS gives
\begin{align}
\frac{1}{2} \int d\psi \,\Mlin(\psi) = \frac{1}{4}\left( 1 - 3 \int d\psi \, x^{6}\right) = \frac{1}{4}\left( 1 - \frac{3}{7}\right) = \frac{1}{7}.
\end{align}
On the other hand, due to \cref{lem:dimmk}
\begin{align}
\int d\psi \, \bra{\psi}^{\otimes k} A_k  \ket{\psi}^{\otimes k} = \lambda_k \Tr\left[ \Pi_{\mathcal{M}_k} \frac{\Pi_{\Sym^k(\mathbb{C}^2)}}{k+1}\right] = \lambda_k\frac{\dim \mathcal{M}_k}{k+1} = \lambda_k\frac{k-5}{k+1},
\end{align}
then from \cref{eq:eq23} we conclude
\begin{align}
\lambda_k = \frac{k+1}{7(k-5)},
\end{align}
hence due to \cref{eq:eq23}
\begin{align}
\left\| \Pi_{\mathcal{M}_k} \ket{\psi}^{\otimes k} \right\|^2 &= \frac{1}{\lambda_k} \bra{\psi}^{\otimes k} A_k  \ket{\psi}^{\otimes k} \\
&=\frac{7(k-5)}{2(k+1)} \Mlin(\psi), \quad 6\leq k \leq 9,
\end{align}
which inserted into \cref{eq:bound44} concludes the proof of the upper bound.
\end{proof}
\end{proposition}

When $k\geq10$, the Clifford representation becomes reducible, hence additional Clifford invariants enter the bound. 
Note that the bounds for $k=6,8$ could have been obtained also through the direct basis expansion of the respective stabilizer orthogonal symmetric subspace.
Summarizing, we get the following bounds
\begin{align}
\begin{aligned}
\Pr_{\mathrm{opt}}(\psi^{\otimes 6} \to \tau) &\leq \frac{1}{2}\,\Mlin(\psi),\\
\Pr_{\mathrm{opt}}(\psi^{\otimes 7} \to \tau) &\leq \frac{7}{8}\,\Mlin(\psi),\\
\Pr_{\mathrm{opt}}(\psi^{\otimes 8} \to \tau) &\leq \frac{7}{6}\,\Mlin(\psi),\\
\Pr_{\mathrm{opt}}(\psi^{\otimes 9} \to \tau) &\leq \frac{7}{5}\,\Mlin(\psi).
\end{aligned}
\end{align}
Note that for $k=6$ the bound is not tight, and our target-dependent no-go gives the optimal one.

\section{Asymptotic upper bound}\label{app:lowmagic}
In this section, we show that the block-repeated version of our protocol achieves an asymptotic rate that is optimal in scaling, up to logarithmic factors.
We first define the relative entropy of magic as \cite{Veitch_2014}
\begin{align}
D_{\mathbb{M}}(\rho) \coloneqq \min_{\sigma \in \mathrm{conv}(\mathrm{Stab}_n)} D(\rho\|\sigma),
\label{eq:defD}
\end{align}
where $D(\rho\|\sigma) \coloneqq \Tr[\rho(\log_2\rho-\log_2\sigma)]$, and
\begin{align}
\operatorname{conv}(\mathrm{Stab}_n)\coloneqq
\left\{\sum_i p_i\ket{s_i}\!\bra{s_i}:\ket{s_i}\in\mathrm{Stab}_n,
\,p_i\geq0,\,\sum_i p_i=1\right\},
\end{align}
is the stabilizer polytope, with $\mathrm{Stab}_n$ being all the possible $n$-qubit stabilizer states. We also define the regularized version as
\begin{align}
D^\infty_{\mathbb{M}}(\rho) \coloneqq \lim_{m\to\infty} \frac{1}{m} D_{\mathbb{M}}(\rho^{\otimes m}).
\end{align}
We use the following direct consequence of Ref.\,\cite{Rubboli_2024}:
\begin{align}
D^\infty_{\mathbb{M}}(\CCZ) = \log_2\left(\frac{16}{9}\right).
\label{dexact}
\end{align}
Indeed, Proposition 9 gives $D_{\mathbb{M}}(\mathrm{Toffoli})=\log_2(16/9)$, while Theorem 4 implies additivity on arbitrary tensor powers. Since $\mathrm{Toffoli}\simeq_{\mathcal{C}}\CCZ$, \cref{dexact} follows by Clifford invariance.
Via standard arguments, the regularized relative entropy of magic bounds asymptotic conversion rates (see \cref{def:asy}) as \cite{HORODECKI_2012,Rubboli_2024}
\begin{align}
R(\rho \to \sigma) \leq \frac{D^\infty_{\mathbb{M}}(\rho)}{D^\infty_{\mathbb{M}}(\sigma)}.
\end{align}
Therefore,
\begin{align}
R(\psi \to \CCZ)
\leq
\frac{D^\infty_{\mathbb{M}}(\psi)}{\log_2(16/9)}
\leq
\frac{D_{\mathbb{M}}(\psi)}{\log_2(16/9)},
\label{eq:targbound}
\end{align}
where the second inequality follows from subadditivity of $D_{\mathbb{M}}$. Note that \cref{def:asy} is stronger than the one considered in \cite{Rubboli_2024}, hence the upper bound applies (see also the remark at the end of \cref{sec:protdef}). 
We then have the following low-magic upper bound on the rate.

\begin{proposition}[Asymptotic low-magic upper bound]
When $\Mlin(\psi)\to0$,
\begin{align}
R(\psi \to \CCZ) \leq \frac{\Mlin(\psi)}{6\log_2(16/9)} \log_2\left(\frac{6e}{\Mlin(\psi)}\right)+O\left(\Mlin(\psi)^2\log_2\frac{1}{\Mlin(\psi)}\right),
\end{align}
or equivalently $R(\psi \to \CCZ) \leq \Tilde{O}(\Mlin(\psi))$, where $\Tilde{O}(x)\coloneqq O(x\log(1/x))$.
\begin{proof}
Let $\psi$ have Bloch vector $(x,y,z)$ and set $s\coloneqq\max\{|x|,|y|,|z|\}$. By Clifford invariance, we can assume $z=s\geq0$. Consider
\begin{align}
\sigma \coloneqq \frac{1+s}{2}\ketbra{0} + \frac{1-s}{2}\ketbra{1}.
\end{align}
Since $\sigma\in\mathrm{conv}(\mathrm{Stab}_1)$, by \cref{eq:defD},
\begin{align}
D_{\mathbb{M}}(\psi) \leq D(\psi\|\sigma) = h_2\left(\frac{1-s}{2}\right),
\label{eq:bound4}
\end{align}
where $h_2(q)\coloneqq -q\log_2 q - (1-q)\log_2(1-q)$.
Set $M\equiv\Mlin(\psi)=\frac12(1-x^6-y^6-z^6)$. Since $x^6+y^6+z^6\leq s^4(x^2+y^2+z^2)=s^4$, we have $M\geq\frac12(1-s^4)$, so $M\to0$ implies $s\to1$. Write $s=1-\eta$. Then $x^2+y^2=1-z^2=2\eta-\eta^2$, and hence $x^6+y^6\leq(x^2+y^2)^3=O(\eta^3)$. Therefore, since $(1-\eta)^6 = 1-6\eta + O(\eta^2)$
\begin{align}
M &= \frac12\left(1-x^6-y^6-(1-\eta)^6\right) \\
&= 3\eta+O(\eta^2),
\end{align}
so $\eta=M/3+O(M^2)$. Thus, since $\eta = 1-s$, using the standard expansion of the binary entropy
\begin{align}
h_2(q) = q\log_2\left(\frac{e}{q}\right)+O(q^2),
\end{align}
we obtain
\begin{align}
\begin{aligned}
h_2\left(\frac{1-s}{2}\right) = \frac{M}{6}\log_2\left(\frac{6e}{M}\right) + O\left( M^2\log_2\frac{1}{M} \right).
\end{aligned}
\end{align}
Combining this with \cref{eq:targbound,eq:bound4} gives
\begin{align}
R(\psi\to\CCZ) \leq \frac{M}{6\log_2(16/9)} \log_2\left(\frac{6e}{M}\right) + O\left( M^2\log_2\frac{1}{M} \right).
\end{align}
\end{proof}
\end{proposition}

To leading order, we get the low-magic upper bound $\approx 0.201 M \log_2(1/M)$, while our asymptotic rate obtained iterating the eight-copy protocol is (see \cref{cor2:rates}) $R \geq M/12 \approx 0.0833 M$. We also give a more explicit bound in the following.

\begin{proposition}[Explicit asymptotic upper bound]\label{prop:globalbound}
For any single-qubit pure state $\psi$ with $\Mlin(\psi) >0$,
\begin{align}
R(\psi\to\CCZ) \leq C\,\Mlin(\psi)\log_2\!\left(\frac{1}{\Mlin(\psi)}\right),
\end{align}
where
\begin{align}
C &\coloneqq
\frac{3(3-\sqrt{3})}{8\log_2(16/9)}\left[1+\frac{ \log_2\!\left(\frac{8e}{3(3-\sqrt{3})}\right)}{\log_2(9/4)}\right]
<1.81.
\end{align}

\begin{proof} Using the same definitions as in the previous proof, we have $s\geq 1/\sqrt{3}$. Moreover, $\Mlin(\psi)\geq\frac{1-s^4}{2}$. Then
\begin{align}
\frac{q}{\Mlin(\psi)} &\leq \frac{1-s}{1-s^4} = \frac{1}{(1+s)(1+s^2)} \\
&\leq \frac{1}{(1+1/\sqrt{3})(1+1/3)} = \frac{3(3-\sqrt{3})}{8}.
\end{align}
Thus, setting for simplicity $\gamma \coloneqq\frac{3(3-\sqrt{3})}{8}$, we have $q\leq\gamma\Mlin(\psi)$. Now using that
\begin{align}
h_2(q)\leq q\log_2\!\left(\frac{e}{q}\right),
\end{align}
and the fact that $q\mapsto q\log_2(e/q)$ is increasing for $0\leq q\leq1$, we obtain due to \cref{eq:bound4}
\begin{align}
D_{\mathbb M}(\psi) \leq \gamma\Mlin(\psi) \log_2\!\left(\frac{e}{\gamma\Mlin(\psi)}\right).
\end{align}
\cref{eq:targbound} then gives
\begin{align}
R(\psi\to\CCZ) &\leq \frac{\gamma}{\log_2(16/9)} \Mlin(\psi)
\log_2\!\left(\frac{e}{\gamma\Mlin(\psi)}\right).
\label{eq:glob}
\end{align}
Finally, for a single qubit $\Mlin(\psi)\leq4/9$, therefore $\log_2\!\left(1/\Mlin(\psi)\right)
\geq\log_2(9/4)$. Consequently,
\begin{align}
\log_2\!\left(\frac{e}{\gamma\Mlin(\psi)}\right)
&= \log_2\!\left(\frac{1}{\Mlin(\psi)}\right) +\log_2\!\left(\frac{e}{\gamma}\right) \\
&\leq \left[ 1+ \frac{\log_2(e/\gamma)}{\log_2(9/4)} \right]
\log_2\!\left(\frac{1}{\Mlin(\psi)}\right).
\end{align}
Substituting this into \eqref{eq:glob} proves the claim, with
\begin{align}
C &= \frac{3(3-\sqrt{3})}{8\log_2(16/9)} \left[1+\frac{\log_2\!\left(\frac{8e}{3(3-\sqrt{3})}\right)}{\log_2(9/4)}\right] \approx 1.8043<1.81.
\end{align}
\end{proof}
\end{proposition}

\end{document}